\documentclass[12pt]{amsart}
\usepackage[utf8]{inputenc}
\usepackage{amsmath,amssymb,amsthm}
\usepackage{url}
\usepackage{geometry}
\usepackage{graphicx}
\usepackage{hyperref}
\usepackage[numbers]{natbib}
\usepackage[dvipsnames,usenames]{xcolor} 
\usepackage{mathtools}
\usepackage{cleveref}
\usepackage{tikz}
\usepackage{comment}
\usepackage{chemfig}
\usepackage[linesnumbered,ruled,vlined]{algorithm2e}
\usepackage{booktabs}
\usepackage{enumerate}

\newtheorem{theorem}{Theorem}
\numberwithin{theorem}{section}
\newtheorem{proposition}[theorem]{Proposition}
\newtheorem{lemma}[theorem]{Lemma}

\theoremstyle{definition}

\newtheorem{remark}[theorem]{Remark}
\newtheorem{example}[theorem]{Example}
\newtheorem{conjecture}[theorem]{Conjecture}

\graphicspath{{Figures/}}

\newcommand{\RR}{\mathbb{R}}
\newcommand{\ZZ}{\mathbb{Z}}
\newcommand{\Rp}{\mathbb{R}_{>0}}

\newcommand{\Rn}{\mathbb{R}_{\geq 0}}

\newcommand{\hx}{(h,x)}
\newcommand{\hxs}{(h^*,x^*)}
\renewcommand{\k}{\kappa}
\renewcommand{\l}{\lambda}
\newcommand{\Skd}{\Sigma_{\kappa,d}}
\newcommand{\Sk}{\Sigma_{\kappa}}
\newcommand{\Shl}{\Sigma_{m,\ell}}

\DeclareMathOperator{\diag}{diag}

\DeclareMathOperator{\sign}{sign}
\DeclareMathOperator{\Id}{Id}
\DeclareMathOperator{\conv}{conv}
\DeclareMathOperator{\np}{N}

\title[Disconnectivity in Multistationarity Regions]{Disconnectivity in Multistationarity Regions of Cascade of Goldbeter--Koshland Loops}
\author{Nidhi Kaihnsa and Kaizhang Wang}

\begin{document}
	
	

	\begin{abstract}
		Dynamics of reaction networks is often modeled by parameterised polynomials and describing the set of parameters for which the system attains multiple positive equilibrium states is a challenging problem. In the full parameter space, determined by the reaction rate constants and the total concentrations, the existing methods can give an upper bound on the number of connected components of regions that enable multistationarity and this can give sufficient condition to establish path connectivity of these regions. In this article, we focus on cascade network of Goldbeter-Koshland loops and show that for this network, with $n\geq 2$ phosphorylation sites and at least one shared phosphatase for dephosphorylation, the region in the space of reaction rates is connected while it may not be true in the full parameter space. We explicitly show that there is a gap between the upper and lower bound for the number of connected regions for the case when $n=2$.
		
		\vskip 0.1in
		
		\noindent
		{\bf Keywords:} phosphorylation networks, connectivity, Newton polytope.

	\end{abstract}
    \maketitle
	
	\section{\bf Introduction}

	In reaction networks theory, parametrised first-order ordinary differential equations (ODE) system is predominantly used to model the interaction of species. Within the framework of mass-action kinetics, the rate of change of concentration of species is described by parametric polynomials \cite{feinbergbook}. One of the main problems of interest focuses on identifying the conditions for these systems to have \textit{multistationarity}. In other words, for which parameters do these polynomials have multiple positive real solutions. Original motivation arises from the need to understand multistability which has been linked to several cellular processes including cell-signalling and switch-like responses \cite{G-distributivity,rendall-MAPK,Switchlike,qiao:oscillations}. Since multistationarity is a necessary condition for multistability, it remains a key object of study.

	There is extensive interest in deciding multistationarity of parametric systems. 
	Once the capacity for multistationarity is established in a reaction network, the next challenging step is to describe the parameter region where it occurs.
	Identifying the entire parameter regions for multistationary is known to be a special real quantifier elimination problem in real algebraic geometry.
	However, the complexity of existing symbolic methods is too high even for small networks \cite{BRADFORD202084}. To study these regions, therefore, it is often useful to instead focus on their topological properties like their shape and connectivity. For reaction networks, lack of connectivity of parameter regions of multistationarity has also been associated with complicated switching behaviours \cite{ParamGeo}.

	Reaction networks, in particular, have two sets of parameters called reaction rates and total concentrations. We denote the space of all these parameters as $\Sigma_{\k,d}$. Ideally, we are interested in identifying the parameters in $\Sigma_{\k,d}$ for which the reaction networks have multiple steady states. However, there is no known direct way of addressing this problem in the space of all parameters $\Sigma_{\k,d}$. Instead, we study this problem in the projected space $\Sk$ or $\Sigma_d$. Finding connected regions in projected space gives a lower bound on the number of connected components in $\Skd$. One of the first instances of studying the topological properties of the regions in $\Sk$ was over the phosphorylation networks
	in \cite{FKWY,FKWY2}, where the authors symbolically established that the parameter region for multistationarity is path connected in the space of reaction rates ($\Sk$) for strongly irreversible phosphorylation networks with arbitrary number of sites available for phosphorylation. This exploration was possible due to the existence of a critical polynomial $q_\k(x)$ (\cite{CC2017}) and then the problem is reduced to studying the positivity of this polynomial which was carried out using various combinatorial and real geometric methods. Further research along this direction include exploiting numerical methods to map out parameter geography \cite{ParamGeo} and determine these regions for small reaction networks \cite{dennis2023connectedness}. For the study of multistationarity over the space of total concentration, $\Sigma_d$, see \cite{CIK}.
	
	The incidence variety of the parametric polynomial system associated with reaction networks can be reparametrised using the convex coordinates introduced by Clarke \cite{Clarke} and denoted by $(m,\ell).$ Consequently, the critical polynomial can also be reparametrised as $q(m,\ell)$. The number of connected regions in $\Shl$ for which the polynomial $q(m,\ell)$ is negative give an upper bound on the number of connected components for multistationarity in $\Skd$ \cite{ConnectivityPaper}. If, in particular, it is connected in $\Shl$, then the parameter region in $\Skd$ is connected. This has been used as a sufficient condition to establish connectivity in $\Skd$ for differrent networks. In \cite{kaihnsa_connectivity_2024}, the authors established connectivity of parameter regions in $\Skd$ for the family of weakly and strongly irreversible phosphorylation networks.  
	
	In this article, we will focus on the the cascades of enzymatic Goldbeter-Koshland loops \cite{AG1981} with $n$ layers (or $n$-layer cascades for short).
	This reaction network plays a crucial role in signal transduction inside cells.
	It takes the following form:
	\begin{equation}\label{eq:nlayernet}\tag{$\mathcal{N}_{n}$}
		Y_i+X_{i-1}  \stackrel{\kappa_{6i-5}}{\underset{\kappa_{6i-4}}{\rightleftarrows}} V_i \stackrel{\kappa_{6i-3}}{\longrightarrow} X_i+X_{i-1}, \quad
		X_i+W_i  \stackrel{\kappa_{6i-2}}{\underset{\kappa_{6i-1}}{\rightleftarrows}} U_i \stackrel{\kappa_{6i}}{\longrightarrow} Y_i+W_i, \quad i=1, \ldots, n .
	\end{equation}
	This network involves two phosphorylation cycles.
	For any $i \in \{1,\ldots,n\}$, $X_i$ and $Y_i$ are the substrate proteins, representing the presence and absence of a phosphate group, respectively.
	In the first layer, the phosphorylation is catalyzed by $X_0$, and in other layers, the process is catalyzed by the activated protein of the previous layer.
	In the $i$-th layer, phosphatase $W_i$ catalyses the dephosphorylation.
	Some of the $W_i$ can be the same species, that is, different layers can be catalyzed by the same phosphatase.
	It has been shown in \cite{EF2012} that if all phosphatases are different, then there cannot be more than one positive steady state.
	Therefore, we will always assume that $n \geq 2$ and consider the case when there exist at least two layers sharing the same phosphatase. In \cite{MG2019}, authors found regions of multistationarity for cascade with 2 layers. For \eqref{eq:nlayernet}, one of the main results in this article will show that the parameter region for multistationarity in $\Sk$ is path connected (Theorem~\ref{thm:mainthm1}). This gives a lower bound on the number of disconnected regions in $\Skd$  for these cascades. We then, in particular, focus on  $(\mathcal{N}_2)$ and show that there are two connected components in the space of convex coordinates (Theorem~\ref{thm:disconnected}). As a result this article presents a concrete network where there is discrepancy in the lower and upper bound for connected regions of multistationarity in $\Skd$ obtained by the current state of the art methods. Whether the parameter region in $\Skd$ is connected for this network is an open problem (Conjecture~\ref{conj}). This network is, therefore, an interesting test case to further develop the methods for connectivity in reaction networks is $\Skd.$

	The structure of the paper is as follows: Section \ref{sec:generalrxn} recalls basic terminologies for the general cascade system of Goldbeter--Koshland loops \eqref{eq:nlayernet} and the computation of its critical polynomial. In Section \ref{sec:connected}, we study the structure of this critical polynomial an establish the connectivity in $\Sk$ for all $n\geq 2.$ In Section \ref{sec:disconnected} we focus on ($\mathcal{N}_2$) and finally, we conclude with an outlook.

	\section{\bf Cascade Networks with $n$ sites}\label{sec:generalrxn}
	
	\subsection{Cascade Network} 
	
	A general reaction network is a collection of reactions over species $X_1,\ldots, X_s$ of the form $\sum_{i=1}^sa_{ij}X_i\rightarrow \sum_{i=1}^sb_{ij}X_i$ for $j=1,\ldots, m.$ Here, $a_{ij},b_{ij}$ are non-negative integers and each reaction is weighted by reaction rate constant $\k_j\in \Rp.$ The associated stoichiometric matrix $N\in \ZZ^{s\times m}$ encodes the net production of species defined as:
	
	\[N:=(b_{ij}-a_{ij})\in \ZZ^{s\times m}\] and the reactant matrix $B\in \ZZ^{s\times m}$ is given as 
	\[B:=a_{ij}\in \ZZ^{s\times m}.\]

	In this work, we consider the family of reaction networks in \eqref{eq:nlayernet}. The species are denoted by $X_0,\ldots X_n,$ $Y_1,\ldots, Y_n,$ $U_1,\ldots,U_n,V_1,\ldots V_n, W_1,\ldots,W_n$. We note that when $n\geq 2$ for \eqref{eq:nlayernet} to exhibit multistationarity, at least two of the species $W_1,\ldots,W_n$ are identical \cite{feliu_enzyme-sharing_2011,EF2012}. In other words, $W_i\in\{H_1\ldots,H_r\}$ where $r< n$. Furthermore, we define a surjective function $\iota:\{1, \ldots, n\} \rightarrow\{1, \ldots, r\}$ with $\iota(j)=i$ if the $j$-th layer is catalyzed by $H_i$.
	For each $i=1, \ldots, r$, let $\Lambda_i=\left\{j \in\{1, \ldots, n\}: \iota(j)=i\right\}$ be the set that collects the numbers of layers that are catalyzed by $H_i$. 
	
	With this notation, we have $4n+r+1$ species given by $r$ catalysts: $H_1,\ldots,H_r$, $2n+1$ substrates: $X_0,\ldots,X_n$ and $Y_1,\ldots,Y_n$, and $2n$ intermediate species: $U_1,\ldots,U_n$ and $V_1,\ldots,V_n$.
	We denote the concentrations of the species with lowercase letters. In particular, we will often use the following notations to denote the concentration vectors of different species: $h=(h_1,\ldots,h_r)\in\Rp^r$, $x=(x_0,\ldots,x_n)\in\Rp^{n+1}$, $y=(y_1,\ldots,y_n)\in\Rp^n$, $u=(u_1,\ldots,u_n)\in\Rp^n$, and  $v=(v_1,\ldots,v_n)\in\Rp^n$.
	Assuming mass-action kinetics, the associated ODE that dictates the evolution of the concentration of the species over time is given by:

	\begin{equation}\label{eq:sys}
		\begin{aligned}
			\frac{d h_i}{d t}=&-\sum_{j \in \Lambda_i} \frac{d u_j}{d t}, \quad  i=1, \ldots, r\\
			\frac{d x_0}{d t} =& -\frac{d v_1}{d t}, \\
			\frac{d x_i}{d t}= &\; \kappa_{6i-3} v_i-\kappa_{6i-2} h_{\iota(i)} x_i+\kappa_{6i-1} u_i \\
			& +\left(\kappa_{6i+2}+\kappa_{6i+3}\right) v_{i+1}-\kappa_{6i+1} x_i y_{i+1}, \quad i=1, \ldots, n-1, \\
			\frac{d x_n}{d t} =&\; \kappa_{6n-3} v_n-\kappa_{6n-2} h_{\iota(n)} x_n+\kappa_{6n-1} u_n. \\
			\frac{d y_i}{d t}= & -\kappa_{6i-5} x_{i-1} y_i+\kappa_{6i-4} v_i+\kappa_{6i} u_i, \quad i=1, \ldots, n, \\
			\frac{d u_i}{d t}  =&\;\kappa_{6i-2} h_{\iota(i)} x_i-\left(\kappa_{6i-1}+\kappa_{6i}\right) u_i, \quad i=1, \ldots, n, \\
			\frac{d v_i}{d t} =&\; \kappa_{6i-5} x_{i-1} y_i-\left(\kappa_{6i-4}+\kappa_{6i-3}\right) v_i, \quad i=1, \ldots, n,\\
		\end{aligned}
	\end{equation}

	The right-hand side of the above equations are polynomials in $h_i,x_i,y_i,u_i,v_i$ and the coefficients are parameterised by \textit{reaction rate constants} $\k_i\in\RR_{>0}$ for $i=1,\ldots,6n.$ We denote rate-constant vector by $\kappa = (\kappa_1,\ldots,\kappa_{6n}) \in \RR^{6n}_{>0}$. This space of positive reaction rate constants is denoted by $\Sigma_\k$. For a fixed $\k \in \RR^{6n}_{>0}$, the steady state variety, $V_{\k}$ is set of all non-negative points where the right-hand side of \eqref{eq:sys} vanish.
	This network admits $n+r+1$ first linear integrals called the \textit{conservation laws}. Hence, the dynamics is contained in the subspace of dimension $3n$ determined by the following equations:
	
	\begin{equation}\label{eq:cLaws}
		\begin{aligned}
			h_i+\sum_{j \in \Lambda_i} u_j & =H_{i, t o t}, \quad i=1, \ldots, r,\\
			x_0+v_1 & =X_{0, t o t}, \\
			x_i+y_i+u_i+v_i+v_{i+1} & =X_{i, t o t}, \quad i=1, \ldots, n-1, \\
			x_n+y_n+u_n+v_n & =X_{n, t o t},
		\end{aligned}
	\end{equation}
	where $H_{i, t o t}>0$ for $i=1,\ldots,r$ and $X_{i, t o t}>0$ for $i=0,\ldots,n$. We denote the {\em total concentration constants} by ${H_{t o t}} := (H_{1, t o t},\ldots,H_{r, t o t})\in\RR^{r}_{>0}$ and ${X_{t o t}} := (X_{0, t o t},\ldots,X_{n, t o t})\in \Rp^{n+1}$. The space of all parameters given by reaction rates and the total concentration of enzymes is denoted by $\Sigma_{\k,d}:=\{(\k,H_{tot},X_{tot})~\mid~ \k\in \Sigma_{\k},H_{tot}\in \RR^r_{>0},X_{tot}\in \RR^{n+1}_{>0}\}$.
	The intersection of the level sets of equations in \eqref{eq:cLaws} with the nonnegative orthant gives the {\em stoichiometric compatibility classes}. The concentrations of species determine the corresponding stoichiometric class. In particular, we define the following map 
	\begin{equation}\label{eq:Phi}
		\begin{aligned}
			\Phi:\RR^{4n+r+1}_{>0}\to\RR^{n+r+1}_{>0}\quad \text{ such that } \quad
			(h,x,y,u,v) \mapsto ({H_{t o t}},{X_{t o t}}).
		\end{aligned}
	\end{equation}
    A reaction network is said to be \textit{conservative }if there exists a conservation relation with only positive coefficients.
    A reaction network is \textit{dissipative} if for all reaction rates and stoichiometric compatibility classes, there exists a compact set where the trajectories eventually enter at some point.
    The network \eqref{eq:nlayernet} is conservative by \eqref{eq:cLaws}, and therefore, also dissipative \cite{CC2017}.
	
	\subsection{Steady States} \label{sec:steadystate} A \textit{positive steady state} of \eqref{eq:sys} is given by $(h,x,y,u,v)\in \Rp^{4n+r+1}$ for which the right hand side of \eqref{eq:sys} vanishes. Since there are $n+r+1$ conservation laws, for fixed $\k \in \Rp^{6n}$ the equations in \eqref{eq:sys} define a variety, $V_\kappa$, of dimension $3n$ in $\Rp^{4n+r+1}$ called the \textit{steady state variety}. For a fixed reaction rate vector, a steady state $z\in V_{\kappa}$ is a \textit{relevant boundary state} if one of the coordinates of $z$ is zero and the corresponding stoichiometric compatibility class containing $z$ intersects the positive orthant.
    The steady state variety $V_\kappa$ admits a rational parametrization \cite{MG2019}.
    For $i=1,\ldots,n$, let
	\begin{equation}\label{eq:abc}
		\alpha_i=\frac{ \kappa_{6i} \kappa_{6i-2} (\kappa_{6i-4}+\kappa_{6i-3}) }{ \kappa_{6i-3} \kappa_{6i-5} (\kappa_{6i-1}+\kappa_{6i}) },\qquad
		\beta_i=\frac{\kappa_{6i-2}}{\kappa_{6i-1}+\kappa_{6i}},\qquad
		\gamma_i=\frac{ \kappa_{6i} \kappa_{6i-2} }{ \kappa_{6i-3} (\kappa_{6i-1}+\kappa_{6i}) }.
	\end{equation}
	For fixed $\kappa\in \Rp^{6n}$ and $(h,x)\in \Rp^{n+r+1}$, \eqref{eq:sys} has a unique solution given by:
	\begin{equation}\label{eq:ssp}
		\begin{aligned}
			y_i =\alpha_i h_{\iota(i)} x_{i-1}^{-1} x_i, \qquad 
			u_i =\beta_i h_{\iota(i)} x_i, \qquad
			v_i =\gamma_i h_{\iota(i)} x_i. 
		\end{aligned}
	\end{equation}
	Consequently, we get the rational parametrization of the set of positive states in terms of $h$ and $x$ denoted by $\varphi_\kappa$. In particular,
	\begin{equation}\label{eq:varphi}
		\begin{aligned}
			\varphi_{\kappa}:\RR^{n+r+1}_{>0}&\to\RR^{4n+r+1}_{>0}.
		\end{aligned}
	\end{equation}
	We further define the following map: 
	\begin{equation}\label{eq:psi}
		\begin{aligned}
			\psi_{\kappa}:\RR^{4n+r+1}_{>0}\to\RR^{4n+r+1}_{>0},
		\end{aligned}
	\end{equation}
	whose first $n+r+1$ entries are given by the left-hand side of $\eqref{eq:cLaws}$ and the remaining $3n$ entries are given by $\frac{d y_1}{d t},\ldots,\frac{d y_n}{d t},\frac{d u_1}{d t},\ldots,\frac{d u_n}{d t},\frac{d v_1}{d t},\ldots,\frac{d v_n}{d t}$.

    
	A rate-constant vector $\k\in\Rp^{6n}$ {\em enables multistationarity} if there exists a stoichiometric compatibility class determined by $H_{tot}$ and $X_{tot}$ whose intersection with the steady state variety gives more than one positive steady state. Identifying parameter regions that enable multistationarity and studying their shapes is an active area of research. In this article, our main result will show that all rate-constant vectors in $\Rp^{6n}$ enable multistationarity in \eqref{eq:nlayernet} for $n\geq 2$ and $r<n$. Consequently, this parameter region is also connected.
	
	\subsection{Criteria for Multistationarity} \label{sec:multicriteria} Given a polynomial map $g:\RR^m \to \RR^m$, we denote by $J_g(z)$ the Jacobian of the map where $z \in \RR^m.$ Using this notation, we define the following function associated with our network \eqref{eq:nlayernet}
	\begin{equation}\label{eq:crit}
		q_{\kappa}\hx:=(-1)^{3 n} \operatorname{det}\left( J_{\psi_\kappa}(\varphi_\kappa \hx)\right),
	\end{equation}
	where $\k \in \Rp^{6n}$, and $\varphi_{\kappa}$ and $\psi_{\kappa}$ are the functions defined in \eqref{eq:varphi} and \eqref{eq:psi} respectively. In particular, we consider the Jacobian of the system evaluated at the positive steady states. 
   Since the network \eqref{eq:nlayernet} is dissipative and has no relevant boundary steady state by \cite[Corollary 2]{CatalystInterm}, the following proposition is a direct consequence of \cite[Theorem 1]{CC2017}.
	
	\begin{proposition}\label{prop:cp<0}
		For fixed $\kappa^* \in \RR^{6n}_{>0}$, if there exists
		$\hxs\in \Rp^{n+r+1}$ such that $q_{\kappa^*} \hxs<0$,
		then $\kappa^*$ enables multistationarity.
	\end{proposition}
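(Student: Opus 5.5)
The plan is to verify the hypotheses of \cite[Theorem 1]{CC2017} for the network \eqref{eq:nlayernet} and then translate its conclusion into the notation used here. That theorem concerns a dissipative network without relevant boundary steady states. It states that if, for some stoichiometric compatibility class, the sign of $\det(J_{\psi_\kappa})$ at a positive steady state is $(-1)^{s+1}$, with $s$ the rank of $N$, then that class contains at least two positive steady states. The relevant Jacobian $J_{\psi_\kappa}$ is the one of the map obtained by replacing redundant rows of the ODE right-hand side with the conservation laws. So the first step is to check that $\psi_\kappa$ in \eqref{eq:psi} is exactly this construction. Its first $n+r+1$ coordinates are the conservation laws \eqref{eq:cLaws}, and the remaining $3n$ coordinates are the equations for $y$, $u$, $v$. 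We will check from \eqref{eq:sys} that the equations for $h$, $x_0$ and $x_i$ are linear combinations of these $3n$ equations, and that the conserved quantities are linearly independent. This gives rank $s=3n$, matching the exponent in \eqref{eq:crit}.

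The second step is to confirm the hypotheses. Conservativity, and hence dissipativity, follows from the positive conservation laws in \eqref{eq:cLaws}, as noted in the text. The absence of relevant boundary steady states is supplied by \cite[Corollary 2]{CatalystInterm}; this uses that every species is either a catalyst, an intermediate or a substrate with the required structure. With these in place, \cite[Theorem 1]{CC2017} applies. Its conclusion is: if $(-1)^{s}\det J_{\psi_\kappa}(z^*)<0$ at some positive steady state $z^*$, then the class $\Phi(z^*)$ contains at least two positive steady states.

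The third step transports this to the parametrization. Given $\hxs$, set $z^*=\varphi_{\kappa^*}\hxs$. By \eqref{eq:ssp}, $z^*$ is a positive steady state for $\kappa^*$, since $\alpha_i,\beta_i,\gamma_i>0$. Then $q_{\kappa^*}\hxs<0$ is exactly the sign condition of the theorem at $z^*$. The compatibility class given by $(H_{tot},X_{tot})=\Phi(z^*)$ therefore has at least two positive steady states, so $\kappa^*$ enables multistationarity.

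The main obstacle is bookkeeping rather than substance. Specifically, we must ensure that the sign convention $(-1)^{3n}$ and the ordering of rows in $\psi_\kappa$ agree with those in \cite{CC2017}. Permuting rows can change the sign of the determinant, so we would fix the convention there: conservation laws placed in the rows of a chosen set of pivot species. We would then check that our choice (replacing the rows for $h$, $x_0,\dots,x_n$, and writing the $y,u,v$ rows afterwards) either matches that convention or differs by an even permutation. If it does not, we would absorb the discrepancy into the definition \eqref{eq:crit}.
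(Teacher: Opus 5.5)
Your proposal is correct and follows the paper's own route: the paper notes that the network is conservative (hence dissipative), has no relevant boundary steady states by \cite[Corollary 2]{CatalystInterm}, and then invokes \cite[Theorem 1]{CC2017}, with your third step making explicit the evaluation at $\varphi_{\kappa^*}\hxs$. Your sign worry does not arise: with species ordered $(h,x,y,u,v)$, the pivots of the row-reduced conservation-law matrix are exactly $h_1,\dots,h_r,x_0,\dots,x_n$, so $\psi_\kappa$ is literally the construction of \cite{CC2017}; note only that $\operatorname{rank} N=3n$ requires the $y,u,v$ rows of $N$ to be independent (e.g.\ via $\det D\neq 0$ from the proof of Proposition~\ref{prop:cp=detJ}), because the two facts you cite each give only $\operatorname{rank} N\le 3n$.
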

	
	Based on Proposition \ref{prop:cp<0}, for a fixed $\k$ deciding multistationarity is equivalent to finding $\hx \in \RR^{n+r+1}_{>0}$ such that the sign of $q_{\k}\hx$ is negative.
	In the next proposition, we simplify the sign computation of this polynomial by reducing it to the determinant of the matrix of size $n+r+1$ as opposed to that of size $4n+r+1$.
	For any $z \in \RR$, $\text{sign}(z)$ is $-1,0,$ or $1$ if $z<0,z=0,$ or $z>0$ respectively.
	
	\begin{proposition}\label{prop:cp=detJ}
		Let $\Phi$, $\varphi_{\kappa}$, and $\psi_\k$ be the functions defined in \eqref{eq:Phi},\eqref{eq:varphi}, and \eqref{eq:psi}, respectively.
		For any $\kappa \in \RR^{6n}_{>0}$ 
		and for any 
		$\hx\in \RR^{n+r+1}_{>0}$, we have
		\[\sign(q_{\kappa}\hx)=\sign(\det \left( J_{\Phi \circ \varphi_\kappa}\hx \right)).\]
	\end{proposition}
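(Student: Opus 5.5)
The plan is to factor $\det J_{\psi_\kappa}$ by a Schur complement with respect to the $(y,u,v)$-block, and then to identify the complement with $J_{\Phi\circ\varphi_\kappa}$ via the implicit function theorem (equivalently, the chain rule applied to the parametrization $\varphi_\kappa$). Order the variables as $z=(h,x,w)$ with $w=(y,u,v)\in\Rp^{3n}$. Write $C(z)$ for the first $n+r+1$ components of $\psi_\kappa$, which are the left-hand sides of \eqref{eq:cLaws} and coincide with $\Phi$. Write $f(z)$ for the remaining $3n$ components $(\dot y,\dot u,\dot v)$. Then $J_{\psi_\kappa}$ has the block form
\[
J_{\psi_\kappa}=\begin{pmatrix} C_1 & C_2\\ F_1 & M\end{pmatrix},
\]
where $C_1=\partial C/\partial(h,x)$, $C_2=\partial C/\partial w$, $F_1=\partial f/\partial(h,x)$ and $M=\partial f/\partial w$.

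The first key step is to show that $M$ is invertible and that $\sign\det M=(-1)^{3n}$. From \eqref{eq:sys}, $f$ is affine in $w$ for fixed $(h,x)$, and $\dot y_i,\dot u_i,\dot v_i$ involve only $y_i,u_i,v_i$ among the $w$-variables. After simultaneously permuting rows and columns, which does not change the determinant, $M$ is therefore block diagonal with $3\times 3$ blocks
\[
\begin{pmatrix} -\kappa_{6i-5}x_{i-1} & \kappa_{6i} & \kappa_{6i-4}\\ 0 & -(\kappa_{6i-1}+\kappa_{6i}) & 0\\ \kappa_{6i-5}x_{i-1} & 0 & -(\kappa_{6i-4}+\kappa_{6i-3})\end{pmatrix}.
\]
Expanding along the middle row gives each block determinant as $-(\kappa_{6i-1}+\kappa_{6i})\,\kappa_{6i-5}x_{i-1}\kappa_{6i-3}<0$. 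Hence $\sign\det M=(-1)^n=(-1)^{3n}$ on the whole positive orthant.

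Next, the Schur complement formula gives
\[
\det J_{\psi_\kappa}=\det M\cdot\det\bigl(C_1-C_2M^{-1}F_1\bigr).
\]
By \eqref{eq:ssp}, $\varphi_\kappa(h,x)=(h,x,w(h,x))$, where $w(h,x)$ is the unique solution of $f(h,x,w)=0$. Differentiating the identity $f(h,x,w(h,x))=0$ yields $\partial w/\partial(h,x)=-M^{-1}F_1$ evaluated at $\varphi_\kappa(h,x)$. The chain rule then gives
\[
J_{\Phi\circ\varphi_\kappa}(h,x)=C_1+C_2\,\partial w/\partial(h,x)=C_1-C_2M^{-1}F_1,
\]
again evaluated at $\varphi_\kappa(h,x)$. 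Substituting into \eqref{eq:crit} we obtain
\[
q_\kappa(h,x)=(-1)^{3n}\det M\cdot\det J_{\Phi\circ\varphi_\kappa}(h,x),
\]
and since $(-1)^{3n}\det M>0$, the claimed equality of signs follows.

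The only real obstacle is bookkeeping. I need to check that the row order of $\psi_\kappa$ (conservation laws first, then $\dot y,\dot u,\dot v$) matches the column order $(h,x,y,u,v)$ so that the block decomposition is exactly as written. I also need to confirm that the permutation to block-diagonal form of $M$ is the same for rows and columns, so that it introduces no sign change. The linearity of $f$ in $w$ guarantees that $M$ is evaluated consistently at $\varphi_\kappa(h,x)$.
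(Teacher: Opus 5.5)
Your proof is correct and follows the paper's skeleton. Your Schur-complement step together with the implicit differentiation $\partial w/\partial(h,x)=-M^{-1}F_1$ is exactly the paper's block column operation, which uses $C+DM=0$ obtained from the chain rule on $\psi_\kappa\circ\varphi_\kappa$; the paper's version does not need invertibility of the $(y,u,v)$-block in advance, but you prove it anyway. The only real difference is the sign of that block's determinant: you note that it splits into $3\times 3$ blocks with determinant $-\kappa_{6i-5}\kappa_{6i-3}(\kappa_{6i-1}+\kappa_{6i})x_{i-1}<0$, which is more elementary and explicit, whereas the paper uses the Matrix-Tree theorem on a rooted digraph, a structural argument that also works when the intermediate block does not decouple.
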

	
	\begin{proof}
		We first consider the map $\psi_\kappa \circ \varphi_\kappa$. The first $n+r+1$ components of $\psi_\kappa \circ \varphi_\kappa$ and $\Phi \circ \varphi_\kappa$ are the same and the remaining $3n$ components are 0. Hence,
		
		\begin{equation}\label{eq:cp1}
			J_{\psi_\kappa \circ \varphi_\kappa}\hx=
			\begin{pmatrix}
				J_{\Phi \circ \varphi_\kappa}\hx \\
				0_{3n \times (n+r+1)}
			\end{pmatrix}
		\end{equation}
		Furthermore, by the chain rule, we have
		\begin{equation}\label{eq:cp2}
			J_{\psi_\kappa \circ \varphi_\kappa}\hx = J_{\psi_\kappa}\left(\varphi_\kappa\hx\right) J_{\varphi_\kappa}\hx.
		\end{equation}
		Let $\Id_{n+r+1}$ the identity matrix of size $n+r+1$ and $M$ be a matrix of size $3n \times (n+r+1)$.
		We rewrite $J_{\varphi_\kappa}$ as
		\begin{equation}\label{eq:cp4}
			J_{\varphi_\kappa}\hx=
			\begin{pmatrix}
				\Id_{n+r+1} \\
				M
			\end{pmatrix},
		\end{equation}
		and we rewrite $J_{\psi_\kappa}\left(\varphi_\kappa\hx\right)$ in block form as
		\begin{equation}\label{eq:cp3}
			J_{\psi_\kappa}\left(\varphi_\kappa\hx\right)=
			\begin{pmatrix}
				A&B \\
				C&D
			\end{pmatrix},
		\end{equation}
		where  $A \in \RR^{(n+r+1)\times (n+r+1)}, \;B \in \RR^{(n+r+1)\times 3n}, \;C \in \RR^{3n \times (n+r+1)},$ and $D \in \RR^{3n\times 3n}$.

		From the above equalities, we get
		\begin{equation}\label{eq:cp5}
			\begin{pmatrix}
				J_{\Phi \circ \varphi_\kappa}\hx \\
				0_{3n \times (n+r+1)}
			\end{pmatrix}
			=
			\begin{pmatrix}
				A&B \\
				C&D
			\end{pmatrix}
			\begin{pmatrix}
				\Id_{n+r+1} \\
				M
			\end{pmatrix}.
		\end{equation}
		Consequently,
		\begin{equation}\label{eq:cp6}
			J_{\Phi \circ \varphi_\kappa}\hx
			=A+B M \qquad  0=C +DM.
		\end{equation}
		By \eqref{eq:cp3}, \eqref{eq:cp5}, and \eqref{eq:cp6}, we have
		\begin{equation*}
			\begin{aligned}
				\det \left( J_{\psi_\kappa}\left(\varphi_\kappa\hx\right) \right)
				=&
				\begin{vmatrix}
					A&B \\
					C&D
				\end{vmatrix}
				=
				\begin{vmatrix}
					A+BM&B \\
					C+DM&D
				\end{vmatrix}
				=
				\begin{vmatrix}
					J_{\Phi \circ \varphi_\kappa}\hx&B \\
					0_{3n\times (n+r+1)}&D
				\end{vmatrix}
				=
				\det \left( J_{\Phi \circ \varphi_\kappa}\hx \right) 
				\det (D).
			\end{aligned}
		\end{equation*}
		Note that the matrix $D$ is obtained by considering the Jacobian of the last $3n$ entries of the map $\psi_\k$ with respect to the variables $y,u,v$. To finish the proof, it is enough to show that the sign of $\det (D)$ is $(-1)^{3n}$. To that end, consider the following digraph rooted at the vertex $\star$:
		
		\vspace{0.5cm}
		\begin{scriptsize}
			\begin{tabular}{ccccccccc}
				& & & & & &
				\schemestart
				$V_1$ 
				\arrow{<=>[$\kappa_{2}$][$\kappa_{1} x_0$]}[45]
				$Y_1$
				\arrow{<-[$\kappa_{6}$]}[315]
				$U_1$
				\arrow{0}[,0.1]
				\qquad 
				$V_2$ 
				\arrow{<=>[$\kappa_{8}$][$\kappa_{7} x_1$]}[45]
				$Y_2$
				\arrow{<-[$\kappa_{12}$]}[315]
				$U_2$
				\arrow{0}[,0.1]
				\quad...\quad$\;\;$
				\arrow{0}[,0.1]
				$V_n$ 
				\arrow{<=>[$\kappa_{6n-4}$][$\kappa_{6n-5} x_{n-1}$]}[45]
				$Y_n$
				\arrow{<-[$\kappa_{6n}$]}[315]
				$U_n$.
				\arrow{->[$\kappa_{6n-1}$]}[-150,3.6]
				\arrow(@c1--@c11){->[][$\kappa_{3}$]}
				\arrow(@c3--@c11){->[][$\kappa_{5}$]}
				\arrow(@c4--@c11){->[$\kappa_{9}$]}
				\arrow(@c6--@c11){->[$\kappa_{11}$]}
				\arrow(@c8--@c11){->[$\kappa_{6n-3}$]}
				$\star$
				\schemestop
			\end{tabular}
		\end{scriptsize}
		
		\medskip

		\noindent 
		It is straightforward to check that the matrix $D$ can also be obtained by deleting the row and column corresponding to the vertex $\star$ of the Laplacian of the digraph.
		Hence, as a consequence of the Matrix-Tree theorem the determinant of $D$ is $(-1)^{3n}$ times the sum of the labels of all spanning trees rooted at $\star$ (cf. \cite[Proposition 1]{MS2018}).
		Since there is at least one spanning tree rooted at $\star$, and the labels of all the edges are positive, the sign of $\det (D)$ is $(-1)^{3n}$. This concludes the proof.
	\end{proof}
	
	By Proposition~\ref{prop:cp<0} and Proposition~\ref{prop:cp=detJ} to determine whether a reaction rate vector enables multistationarity, it is enough to consider the sign of the following function
	\begin{align}p_{\k}\hx:=\det \left( J_{\Phi \circ \varphi_\kappa}\hx \right).\end{align}
	We will discuss the structure of the matrix $ J_{\Phi \circ \varphi_\kappa}\hx$ in Section~\ref{sec:str} and then use it to study the sign of the determinant. 
	To that end, in the next subsection, we highlight some basic results from combinatorial geometry that we will use to study these signs. 
	
	\subsection{Newton Polytopes}\label{sec:NP} Consider a multivariate function $f:\Rp^m\to\RR$ given by 
	\begin{align}\label{eq:f}f(z):=\sum_{\mu\in \ZZ^m}c_\mu z^\mu. \end{align}
	The Newton polytope of $f$, denoted by $\np(f)$, is given by the convex hull of the exponents of monomials with non-zero coefficients. In other words, $\np(f):=\conv(\{\mu\in\ZZ^m~\mid ~c_\mu \neq 0\}).$
	We call $\mu$ a {\em positive exponent} (respectively, {\em negative exponent}) of $\np(f)$, if the coefficient $c_{\mu}$ of $x^\mu$ is  positive (resp. negative). Let $\Omega$ be the set of all the exponent vectors of $f$. If there exists a hyperplane $L_{\lambda,\ell}(z):=z\cdot\lambda+\ell$ such that $L_{\lambda,\ell}(\mu)\geq 0$ for some $\mu \in \Omega$ and $L_{\lambda,\ell}(\nu)<0$ for all $\nu \in \Omega \setminus \mu$, then $\mu$ is a vertex of $\np(f)$ and $L_{\lambda,\ell}(z)$ is a \textit{separating hyperplane} of $\mu$. We will repeatedly use the following well-known lemma that guarantees the polynomial can reach a positive (resp. negative) value on the positive orthant if its Newton polytope has a positive (resp. negative) vertex.
	
	\begin{lemma}\label{lem:negvertex}
		Consider $f(z)$ as in \eqref{eq:f}.
		If $\np(f)$ has a positive vertex (resp. negative vertex), then there exists $z^* \in \RR^m_{>0}$ such that $f(z^*)>0$ (resp. $f(z^*)<0$).
	\end{lemma}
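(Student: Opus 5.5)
We use the separating hyperplane from the definition to build a monomial curve along which the vertex term dominates. Suppose $\mu$ is a positive vertex of $\np(f)$ (the negative case is identical after replacing $f$ by $-f$). By the definition preceding the lemma, there is a hyperplane $L_{\lambda,\ell}(z)=z\cdot\lambda+\ell$ with $L_{\lambda,\ell}(\mu)\geq 0$ and $L_{\lambda,\ell}(\nu)<0$ for every other exponent $\nu\in\Omega$. Since only differences $\nu\cdot\lambda-\mu\cdot\lambda$ will matter, we may forget $\ell$ and record the strict inequality $\mu\cdot\lambda>\nu\cdot\lambda$ for all $\nu\in\Omega\setminus\{\mu\}$. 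Here $\Omega$ is finite, because $f$ is a (Laurent) polynomial.

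Next we evaluate $f$ along the curve $z(t)=(t^{\lambda_1},\ldots,t^{\lambda_m})$ for $t>0$. This curve stays in $\RR^m_{>0}$, and $z(t)^\nu=t^{\nu\cdot\lambda}$ for every exponent $\nu$. Hence
\[
f(z(t))=c_\mu t^{\mu\cdot\lambda}+\sum_{\nu\in\Omega\setminus\{\mu\}}c_\nu t^{\nu\cdot\lambda}.
\]
Dividing by $t^{\mu\cdot\lambda}>0$ gives
\[
t^{-\mu\cdot\lambda}f(z(t))=c_\mu+\sum_{\nu\neq\mu}c_\nu t^{(\nu-\mu)\cdot\lambda}.
\]
Every exponent $(\nu-\mu)\cdot\lambda$ in the sum is strictly negative. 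So, since the sum is finite, each term tends to $0$ as $t\to\infty$, and the whole expression tends to $c_\mu>0$. We can therefore choose $t^*$ large enough that the expression is positive, and set $z^*=z(t^*)$. Then $f(z^*)>0$, and likewise $f(z^*)<0$ when $c_\mu<0$.

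The argument is essentially routine. The one point needing care is that $\lambda$ may have real rather than integer entries. This causes no difficulty, because $t^{\lambda_i}$ is well defined and positive for $t>0$, and the monomial identity $z(t)^\nu=t^{\nu\cdot\lambda}$ holds for real exponents. Alternatively, one could perturb $\lambda$ to a rational vector, which keeps the finitely many strict inequalities, and then clear denominators.
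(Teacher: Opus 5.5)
Your proof is correct. It is the standard argument: the paper gives no proof and calls the lemma well known, but its remark that the points can be found via outer normal cones of $\np(f)$ is exactly your choice of $\lambda$ in the interior of the outer normal cone at $\mu$, with evaluation along $t\mapsto(t^{\lambda_1},\ldots,t^{\lambda_m})$ for large $t$. The one thing you take for granted is that every vertex admits such a $\lambda$, since the paper's text only asserts the converse direction; this follows from the standard fact that every vertex of $\conv(\Omega)$, for finite $\Omega$, is an exposed point lying in $\Omega$.
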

	
	We illustrate this with the following example.
	
	\begin{example}
		Consider a polynomial $f(z)=4z_1^2 z_2^2 - z_1^2 + 2z_2^2 + 1$.
		The exponent vectors of $f(z)$ are given by $\{(2,2),(2,0),(0,2),(0,0)\}$.
		All the exponents of $\np(f)$ are the vertices. Note the polynomial has both negative and positive vertex. In the positive orthant, $f(z)$ attains positive value at $(1,1)$ and it attains negative value at at $(3,\frac{1}{3})$.
	\end{example}
	
	We remark that the explicit points at which the polynomial attains such values can be found using the outer normal cones of the Newton polytopes. For detailed exposition on convex polytopes, the readers are referred to \cite{book-grunbaum}. In Section~\ref{sec:connected}, we will first consider the structure of the Jacobian matrix $J_{\Phi \circ \varphi_\kappa}\hx$ and its determinant and then use it to determine a negative vertex in the newton polytope which will then allow us to employ Lemma~\ref{lem:negvertex}.

	\section{\bf Connectedness in $\Sigma_k$}\label{sec:connected}
	
	In this section, for arbitrarily fixed $n\geq 2$, our main aim is to prove that the region of parameters in $\Sigma_\k$ that enable multistationarity for the network \eqref{eq:nlayernet} is path connected. In fact, we will show that it is all of $\Sigma_\k$. In particular, we aim to prove the following result:
	
	\begin{theorem}\label{thm:mainthm1}
		For $n\geq 2$, $r<n$, and for every $\k \in \Rp^{6n}$ there exists $\hxs \in \Rp^{n+r+1}$ such that $p_{\k}\hxs<0.$ In other words, all $\k \in \Rp^{6n}$ enables multistationarity in \eqref{eq:sys}.
	\end{theorem}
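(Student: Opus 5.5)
The plan is to reduce Theorem~\ref{thm:mainthm1} to Lemma~\ref{lem:negvertex}. By Propositions~\ref{prop:cp<0} and~\ref{prop:cp=detJ}, it suffices to show that for every fixed $\k$ the Laurent polynomial $p_\k\hx=\det J_{\Phi\circ\varphi_\k}\hx$ has a negative vertex in its Newton polytope, viewed as a function of $\hx$.

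\emph{Step 1: write the Jacobian explicitly.} Composing \eqref{eq:ssp} with \eqref{eq:cLaws}, the components of $\Phi\circ\varphi_\k$ are:
\begin{itemize}
\item $H_{i,tot}=h_i\bigl(1+\sum_{j\in\Lambda_i}\beta_j x_j\bigr)$;
\item $X_{0,tot}=x_0+\gamma_1 h_{\iota(1)}x_1$;
\item $X_{i,tot}=x_i+\alpha_i h_{\iota(i)}x_ix_{i-1}^{-1}+(\beta_i+\gamma_i)h_{\iota(i)}x_i+\gamma_{i+1}h_{\iota(i+1)}x_{i+1}$.
\end{itemize}
Rather than expanding $p_\k$ directly, I would work with $\det J$ multiplied by $\prod h_i\prod x_i$. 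Equivalently, I would replace each column by the variable times the column, giving logarithmic derivatives. Then every entry is a polynomial whose monomials carry explicit signs.

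All entries are nonnegative except one family: $\partial X_{i,tot}/\partial x_{i-1}$ for $i\ge 1$, which contains the term $-\alpha_i h_{\iota(i)}x_ix_{i-1}^{-2}$. The diagonal part gives the positive monomial $\prod h_i\prod x_i$ (the leading term $1$ after scaling). So negativity must come from a permutation using an odd cycle of nonnegative entries, or an even cycle containing one negative entry.

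\emph{Step 2: choose the candidate monomial.} Since $r<n$, pick two layers $j<k$ with $\iota(j)=\iota(k)=:s$. I would take the permutation that is the identity on every row except for a 3-cycle. This 3-cycle sends row $H_{s,tot}$ to column $x_k$ (entry $\beta_k h_s$), row $X_{k,tot}$ to column $x_{k-1}$ (entry $-\alpha_k h_s x_kx_{k-1}^{-2}$), and row $X_{k-1,tot}$ to column $h_s$. The last entry contains $(\alpha_{k-1}x_{k-1}x_{k-2}^{-1}+\beta_{k-1}x_{k-1}+\gamma_{k-1}x_{k-1})$ if $\iota(k-1)=s$, and contains $\gamma_k x_k$ in any case. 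Choosing the $\gamma_k x_k$ term, the 3-cycle has sign $+1$ and exactly one negative factor, so its contribution is negative. On the remaining rows I would pick the constant diagonal terms (the "$1$" in each $X_{i,tot}$ and $H_{i,tot}$ row), giving a monomial of the form $-c\,h_s^{a}x_k^{b}x_{k-1}^{-c'}$ times the unit diagonal remainder, with $c>0$ an explicit product of $\alpha_k,\beta_k,\gamma_k$.

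If this choice collides with another permutation producing the same exponent, I would modify it. The alternative is to choose the diagonal terms for the other rows so as to maximize the degree in some auxiliary variable.

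\emph{Step 3: show it is a vertex with uncancelled coefficient.} I would exhibit a weight vector $\lambda$ on $\hx$ for which this exponent is the unique maximizer of $\lambda\cdot\mu$ over all exponents of $p_\k$, i.e. a separating hyperplane. The natural choice is to send $h_s$ and $x_k$ large and $x_{k-1}$ small, with the other variables of weight $0$. The row structure (each row's entries have bounded exponents in $h_s,x_k,x_{k-1}$) then bounds $\lambda\cdot\mu$ for every permutation term. One checks that the maximum is attained only by the chosen 3-cycle with the chosen summands, so no cancellation can occur. Lemma~\ref{lem:negvertex} then gives $\hxs$ with $p_\k\hxs<0$, and Propositions~\ref{prop:cp<0} and~\ref{prop:cp=detJ} finish the proof.

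The main obstacle is Step 3. The bookkeeping must cover all permutations for general $n$, including the cases $k=j+1$ versus $k>j+1$, and whether intermediate layers also use $H_s$. In particular one must ensure that no positive term, for example the 2-cycle between $H_{s,tot}$ and the $x_k$ and $x_j$ columns, attains the same extremal exponent. If a single weight vector fails, I would split into these cases and, if needed, pick the extremal cycle through the largest shared index $k$ and its predecessor to isolate it.
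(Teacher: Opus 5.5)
\textbf{There is a genuine gap: the monomial you pick in Step 2 has coefficient zero in $p_\kappa$.} Your 3-cycle contributes $-\alpha_k\beta_k\gamma_k\,h_s^2x_k^2x_{k-1}^{-2}$. A second permutation produces the same monomial:
\begin{itemize}
\item rows $X_{k-1,tot}$ and $X_{k,tot}$ are transposed onto columns $x_k$ and $x_{k-1}$;
\item $H_{s,tot}$ stays on its diagonal and takes the summand $\beta_kx_k$ from $1+\sum_{i\in\Lambda_s}\beta_ix_i$;
\item all other rows take the constant $1$.
\end{itemize}
It contributes $(-1)\cdot\beta_kx_k\cdot\gamma_kh_s\cdot(-\alpha_kh_sx_kx_{k-1}^{-2})=+\alpha_k\beta_k\gamma_k\,h_s^2x_k^2x_{k-1}^{-2}$, which exactly cancels yours.

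For $n=2$, $r=1$ (write $h=h_1$; here necessarily $j=1$, $k=2$), these are the only two permutations producing $h^2x_1^{-2}x_2^2$. So the monomial is absent from $p_\kappa$, and no weight vector in Step 3 can rescue it.

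The structural reason is that your 3-cycle never uses the partner layer $j$. You note yourself that $\gamma_kx_k$ is present ``in any case'', and every entry in the cycle and in its cancelling partner exists for any layer $k$ even when all phosphatases are distinct. In that setting $p_\kappa$ cannot have a negative vertex: otherwise Lemma~\ref{lem:negvertex} and Propositions~\ref{prop:cp<0} and~\ref{prop:cp=detJ} would give multistationarity, contradicting \cite{EF2012}. A valid witness must use an entry that exists only because two layers share $H_s$.

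Step 3 has a separate problem. A weight that is positive on $h_s$ and zero elsewhere does not favour the constant-$1$ choices on the other diagonals. For example, if $j<k-1$, the $X_{j,tot}$ diagonal contains $(\beta_j+\gamma_j)h_s$, which scores higher.

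\textbf{What is missing is a cycle that runs from layer $k$ down to its partner $j$ and closes through the shared column.} Take $H_{s,tot}\to x_k$, then $X_{i,tot}\to x_{i-1}$ for $i=k,\ldots,j$, and finally $X_{j-1,tot}\to h_s$ via $\gamma_jx_j$. That last entry lies in column $h_s$ precisely because $\iota(j)=s$. The cycle has length $k-j+3$ and $k-j+1$ negative factors, so it contributes with sign $-1$; your 3-cycle is its degenerate case $j=k$.

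Even this term is not isolated. For $n=2$, $r=1$, three permutations contribute to $h^3x_0^{-2}x_2$:
\begin{itemize}
\item this 4-cycle gives $-\alpha_1\alpha_2\beta_2\gamma_1$;
\item rows $H,X_2,X_0,X_1$ mapped to columns $x_2,h,x_1,x_0$ give $-\alpha_1\alpha_2\beta_2\gamma_1$;
\item the transposition of $X_0$ and $X_1$, with $\beta_2x_2$ and $\alpha_2hx_1^{-1}$ on the diagonals, gives $+\alpha_1\alpha_2\beta_2\gamma_1$.
\end{itemize}
The net coefficient is negative, but only after this bookkeeping. So the substance of the proof is exactly what you left as ``one checks''.

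The paper handles it as follows:
\begin{itemize}
\item it treats $h$ as a parameter and removes the $h$-block by a Schur complement, leaving the $(n+1)\times(n+1)$ matrix $J_4$ in $x$ alone;
\item it constrains the possible exponents (Proposition~\ref{prop:mu});
\item it chooses $\omega=(-2,0,-1,\ldots,-1,1,\ldots,1)$ from the largest shared index $\hat{i}$ and proves it is a vertex with an explicit $\lambda$ (Proposition~\ref{prop:vtx});
\item it shows that every permutation contributing to $\omega$ has negative sign (Proposition~\ref{lemma:nvtx}).
\end{itemize}
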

	
	To prove the result, we further treat $h=(h_1,\ldots, h_r)\in \Rp^r$ as parameters and rewrite $p_{\k}(h,x)$ as $p_{\k,h}(x)$. With these notations in place, we will prove the following result:
	
	\begin{theorem}\label{thm:mainthm2}
		For $n\geq 2$, $r<n$, and for every $\k \in \Rp^{6n}$ and $h \in \Rp^r$ there exists $x^*\in \Rp^{n+1}$ such that  $p_{\k,h}(x^*)<0$. 
	\end{theorem}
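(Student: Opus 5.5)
The plan is to fix $\k\in\Rp^{6n}$ and $h\in\Rp^r$, view $p_{\k,h}(x)=\det J_{\Phi\circ\varphi_\k}(h,x)$ as a Laurent polynomial in $x=(x_0,\ldots,x_n)$, and exhibit a \emph{negative vertex} of $\np(p_{\k,h})$. Lemma~\ref{lem:negvertex} then gives $x^*\in\Rp^{n+1}$ with $p_{\k,h}(x^*)<0$.

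\textbf{Step 1: write down the Jacobian.} By \eqref{eq:ssp}, the components of $\Phi\circ\varphi_\k$ are
\[
h_i\Bigl(1+\sum_{j\in\Lambda_i}\beta_j x_j\Bigr),\qquad x_0+\gamma_1 h_{\iota(1)}x_1,
\]
\[
x_i+\alpha_i h_{\iota(i)}x_{i-1}^{-1}x_i+(\beta_i+\gamma_i)h_{\iota(i)}x_i+\gamma_{i+1}h_{\iota(i+1)}x_{i+1},
\]
with the last term absent for $i=n$. Hence $J=J_{\Phi\circ\varphi_\k}(h,x)$ has the following structure.
\begin{itemize}
\item The $x$-block is upper Hessenberg-like: its diagonal entries are positive, its superdiagonal entries are $\gamma_{i+1}h_{\iota(i+1)}$, and its subdiagonal entries are the negative numbers $-\alpha_i h_{\iota(i)}x_{i-1}^{-2}x_i$.
\item The $h$-columns are nonnegative.
\item The $H_i$-rows have diagonal entry $1+\sum_{j\in\Lambda_i}\beta_jx_j$ and entry $\beta_jh_i$ in column $x_j$.
\end{itemize}

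\textbf{Step 2: reduce the $h$-block.} Since the $h$-diagonal block is diagonal and invertible, take the Schur complement over it. This gives
\[
p_{\k,h}(x)=\prod_{i=1}^r\Bigl(1+\sum_{j\in\Lambda_i}\beta_jx_j\Bigr)\,\det S(x),
\]
where $S$ is the $x$-block corrected by rank-one terms $-\partial_h(\cdot)\,\beta_j h_i/(1+\sum\beta x)$. Clearing denominators leaves a polynomial $\tilde p(x)$ with the same sign, and I will work with $\tilde p$.

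\textbf{Step 3: locate the negative monomials.} Enumerate the monomials of $\tilde p$ through the permutation expansion of $\det J$. For a layer $j$ alone (all $W$ distinct) every term has positive sign; this is consistent with the monostationarity result of \cite{EF2012}. Negative terms must therefore come from permutations that use an $H_i$-row in a column $x_j$ and, for the same $i$, the column $h_i$ in a row $X_k$ with $k\neq j$ and $j,k\in\Lambda_i$. In other words, they come from two layers sharing a phosphatase, and they pass through the entries $\partial_{h}$ of layer $k$ and $\partial_{x_j}$ of the shared $H$-row.

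Since $r<n$, choose $j<k$ in some $\Lambda_i$ with $k-j$ minimal. The sign of the resulting term is governed by the cycle closed along the chain of superdiagonal entries $\gamma_{l+1}h$, or of negative subdiagonal entries, between $j$ and $k$. I expect the negative candidate to use the subdiagonal entries $-\alpha_l h x_{l-1}^{-2}x_l$ for $l=j+1,\ldots,k$, which carry high negative powers of $x_{l-1}$. The guess for the exponent is that of
\[
\beta_j h_i\cdot \partial_{h_i}(\text{row }X_k)\cdot\prod_{l=j+1}^{k}\bigl(-\alpha_l h x_{l-1}^{-2}x_l\bigr)\cdot(\text{diagonal terms elsewhere}),
\]
where the diagonal factors elsewhere are chosen to be the pure constants $1$.

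\textbf{Step 4: show it is an isolated vertex.} I will search for a weight vector $\lambda$ that separates this monomial. The natural choice is to make $x_j$ very large, to make $x_{j},\ldots,x_{k-1}$ small relative to their successors, and to keep the remaining coordinates at order $1$. It is then a finite check that $\lambda$ strictly maximizes on this exponent and that no other permutation produces the same monomial with a cancelling positive coefficient.

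This last point is the main obstacle. Several permutations can yield the same $x$-monomial, and the positive terms $\alpha h x_{i-1}^{-1}x_i$ in the diagonal compete with the negative subdiagonal ones. If the direct guess fails, I will first try to reduce to the case $k=j+1$, or to the 2-layer computation of \cite{MG2019}. The reduction would send all other $x_l$ toward limits that decouple the other layers, which factors $\tilde p$ into a positive part times the two-layer determinant, and then apply the known negative vertex there.
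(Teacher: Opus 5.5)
Your frame is the paper's: eliminate the $h$-block, find a negative vertex of $\np(p_{\kappa,h})$, and apply Lemma~\ref{lem:negvertex}. But the proposal stops where the proof starts. The exponent is only guessed, and both the vertex separation and the absence of cancellation are deferred to a ``finite check''. The guess also does not work.

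\textbf{The candidate term and the Step~3 premise.} As written, $\beta_j h_i\cdot\partial_{h_i}(\text{row }X_k)\cdot\prod_{l=j+1}^{k}(-\alpha_l h x_{l-1}^{-2}x_l)$ uses row $X_k$ twice and column $x_j$ twice, so it is not a term of the determinant. The valid cycle is $H_i\to x_k$, then $X_l\to x_{l-1}$ for $l=k,\dots,j+1$, then $X_j\to h_i$, and it does carry sign $-1$. However, choosing for instance $(\beta_j+\gamma_j)x_j$ from $\partial_{h_i}(X_j)$ gives $-\beta_k(\beta_j+\gamma_j)h_i\prod_{l=j+1}^{k}\alpha_lh_{\iota(l)}\,x_j^{-1}\cdots x_{k-1}^{-1}x_k$. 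This is cancelled exactly by the identity-permutation term that uses $\beta_kx_k$ from $1+\sum_{\Lambda_i}\beta x$, $b_j+c_j$ from row $X_j$, and $a_lx_{l-1}^{-1}$ from rows $X_{j+1},\dots,X_k$. The $\alpha_jx_{j-1}^{-1}x_j$ part cancels the same way.

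Your Step~3 premise is also false at the level of permutation terms. Even with all phosphatases distinct, the $3$-cycle $H_{\iota(k+1)}\to x_{k+1}$, $X_{k+1}\to x_k$, $X_k\to h_{\iota(k+1)}$ gives a negative term. It disappears only against the transposition term $c_{k+1}a_{k+1}x_{k+1}x_k^{-2}\cdot\beta_{k+1}x_{k+1}$. So sign bookkeeping over permutations of the full Jacobian, or of your Schur complement after clearing denominators, cannot locate the surviving negative monomials.

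Your fallback of degenerating to two layers is unsupported. \cite{MG2019} gives no negative vertex valid for all $\kappa$, and a factorization on a face of $\np(p)$ would itself have to be proved.

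\textbf{The missing idea: cancel before expanding.} The paper subtracts $h_{\iota(j)}^{-1}x_j$ times the $x_j$-column from the $h_{\iota(j)}$-column. This turns the $h$-block into $\Id_r$ and gives $p=\det J_4$ with the Laurent-polynomial entries \eqref{eq:J4}. In $J_4$, sharing between layers $j$ and $k$ contributes $+\beta_kx_j$ to $(J_4)_{jk}$, so your cycle enters with positive sign. The only negative cross entries are $-\beta_ka_jx_jx_{j-1}^{-1}$, and they are present only when $\iota(j-1)=\iota(k)$.

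Accordingly, the paper does not take a minimal gap. It takes $\hat i$, the largest index of a layer that shares its phosphatase; this guarantees that in rows $i>\hat i$ the monomial $x_i$ appears only on the diagonal. The candidate exponent is $\omega=(-2,0,-1,\dots,-1,1,\dots,1)$. It also uses the row-$0$ transposition $c_1\cdot(-a_1x_1x_0^{-2})$, which your plan never touches.

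The exponent restrictions of Proposition~\ref{prop:mu}, together with the weight $\lambda=(-\hat i,\dots,-1,1,\dots,1)$, make $\omega$ a vertex (Proposition~\ref{prop:vtx}). Proposition~\ref{lemma:nvtx} then shows that every selection producing $\omega$ comes from some $j\in\{2,\dots,\hat i\}$ with $\iota(j-1)=\iota(\hat i)$ and has sign $-1$, so no cancellation can occur.
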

	
	Note that Theorem~\ref{thm:mainthm1} is a direct consequence of Theorem~\ref{thm:mainthm2}. To prove this result, it is enough to show that the corresponding Newton polytope of $p_{\k,h}$ has a negative vertex, and then the result follows by Lemma~\ref{lem:negvertex}.
	
	\subsection{Structure of $p_{\k,h}(x)$}\label{sec:str} To study the Newton polytope of $p_{\k,h}$, we need to study $\det(J_{\Phi\circ\varphi_\k})$. Recall that $\Phi\circ\varphi_\k$ is determined by the left-hand side of \eqref{eq:cLaws} after substituting the rational parametrization given in \eqref{eq:ssp}. The Jacobian of this map is a matrix of size $n+r+1$. We write this matrix as below:
	
	\begin{equation}\label{eq:Jacobian}
		J_{\Phi\circ\varphi_\k}=
		\begin{pmatrix}
			J_1'&J_2' \\
			J_3'&J_4'
		\end{pmatrix},
	\end{equation}
	where $J_1'$ and $J_4'$ are matrices of size $r$ and $n+1$ respectively. The entries of submatrices are described below. The rows and columns of $J_4'$ are labeled from $0,\ldots,n.$ Since we want to work with $p_{\k,h}(x)$ with only $x$ as variable, we set the following notation:
	
	\begin{equation}\label{eq:abc1}
		a_i:=	\alpha_ih_{\iota(i)},\qquad
		b_i:=	\beta_ih_{\iota(i)},\qquad
		c_i:=	\gamma_ih_{\iota(i)}.
	\end{equation}
	
	The matrix $J_1'$ is a diagonal matrix of size $r$ with the following entries:
	
	\begin{equation}
		(J_1')_{ii}= 1+\sum_{k\in \Lambda_i}\beta_kx_k.
	\end{equation}
	
	The matrix $J_4'$ is:
	
	\begin{equation}
		(J_4')_{ij}= \begin{cases} 1  &i=j=0\\
			1+a_i	x_{i-1}^{-1}+ b_i+c_i  \qquad &i=j \text{ and } j\in \{1,\ldots, n\}\\
			c_j &i+1=j \text{ and } j\in \{1,\ldots, n\}\\
			-a_ix_ix_{i-1}^{-2} &i=j+1 \text{ and } j\in \{0,\ldots, n-1\} \\
			0 &else
		\end{cases}.
	\end{equation}

	The matrices $J_2'$ and $J_3'$ have sizes $r \times (n+1)$ and $(n+1) \times r$ respectively. We label the rows of $J_2'$ from $1,\ldots,r$ and columns from $0,\ldots,n$. Similarly, we label the rows of $J_3'$ from $0,\ldots,n$ and columns from $1,\ldots,r$. Going forward, we define the following notation:
	
	\begin{equation}
		\delta_{i,j}=\begin{cases}
			1 \qquad &i \in \Lambda_j\\
			0 \qquad &else
		\end{cases}.
	\end{equation}
	
	Additionally, we set $\delta_{-1,j}=\delta_{0,j}=0=\delta_{i,0}$ for all $i,j$. The entries of $J_2'$ and $J_3'$ are now given by:
	
	\begin{equation}
		(J_2')_{ij}=	\delta_{j,i}b_j, \end{equation}
	\begin{equation}	(J_3')_{ij}= 
		\delta_{i,j} \big(\alpha_i x_i x_{i-1}^{-1}+(\beta_i+\gamma_i)x_i\big)+\delta_{i+1,j}\big(\gamma_{i+1}x_{i+1}\big) .
	\end{equation}

	Next, we carry out some matrix manipulation to compute the determinant. For $j\in \{1,\ldots,n\}$, subtract $h_{\iota(j)}^{-1} x_j$ times the $(r+1+j)$-th column of $J_{\Phi\circ \varphi_\k}$ from the $\iota(j)$-th column of $J_{\Phi\circ \varphi_\k}$. This gives the following matrix:
	
	\begin{equation}
		\label{eq:Jacobian1}
		J=
		\begin{pmatrix}
			\Id_{r}&J_2' \\
			J_3&J_4'
		\end{pmatrix},
	\end{equation}
	where the entries of $J_3$ are as follows: 
	
	\begin{equation*}
		(J_3)_{ij}=-\delta_{i,j}x_ih^{-1}_{\iota(i)}+\delta_{i-1,j}a_ix_ix_{i-1}^{-1}h^{-1}_{\iota(i-1)}.
	\end{equation*}
	Note that 
	
	\begin{equation}
		\label{eq:Jacobian2}
		\det J_{\Phi\circ\varphi_\k} = \det J=
		\begin{vmatrix}
			\Id_{r}&J_2' \\
			J_3&J_4'
		\end{vmatrix}=\begin{vmatrix}
			\Id_{r}&J_2' \\
			0&J_4
		\end{vmatrix}=\det J_4,
	\end{equation}
	where $J_4:=J_4'-J_3J_2'$ is a matrix of size $n+1$.
	It is direct to check entries of $J_3J_2'$ is given by:
	
	\begin{equation*}
		(J_3J_2')_{ij}=\sum_{k=1}^r(-\delta_{i,k}\delta_{j,k}\beta_jx_i+\delta_{i-1,k}\delta_{j,k}\beta_ja_ix_ix_{i-1}^{-1})
	\end{equation*}
	and hence, entries of $J_4$ are given by:
	{\small
		\begin{equation}\label{eq:J4}
			(J_4)_{ij}=\begin{cases} 1  &i=j=0\\
				1+a_i	x_{i-1}^{-1}+ b_i+c_i+\beta_ix_i-\sum_{k=1}^{r} \delta_{i-1,k}\delta_{i,k}\beta_ia_ix_ix^{-1}_{i-1}  \qquad &i=j \text{ and } j\in \{1,\ldots, n\}\\
				c_j +\sum_{k=1}^r(\delta_{i,k}\delta_{i+1,k}\beta_{i+1}x_i-\delta_{i-1,k}\delta_{i+1,k}\beta_{i+1}a_ix_ix_{i-1}^{-1}) &i+1=j \text{ and } j\in \{1,\ldots, n\}\\
				-a_ix_ix_{i-1}^{-2}+\sum_{k=1}^r(\delta_{i-1,k}\delta_{i,k}\beta_{i-1}x_i-\delta_{i-1,k}\beta_{i-1}a_ix_ix_{i-1}^{-1}) &i=j+1 \text{ and } j\in \{0,\ldots, n-1\} \\
				\sum_{k=1}^r(\delta_{i,k}\delta_{j,k}\beta_{j}x_i-\delta_{i-1,k}\delta_{j,k}\beta_{j}a_ix_ix_{i-1}^{-1}) & else
			\end{cases}.
		\end{equation}
	}
	
	We want to compute the determinant of $J_4$ and show that for all $\k,h$ there exists $x\in \Rp^{n+1}$ such that the determinant is negative. Note that $p_{\k,h}(x)=\det (J_4)$. In the remaining section, we will treat $p_{\k,h}$ and each of the entry in $J_4$ as an element in $R:=\RR(\k,h)[x,x^{-1}]$ where $x^{-1}:=(x_0^{-1},\ldots x_n^{-1})$. For brevity, from now on we will simply write $p(x)$ in place of $p_{\k,h}(x).$
	
	\begin{remark}\label{rem:obs}  We make following observations on the entries of $J_4$ that we will recall repeatedly later.
		\begin{itemize}
			\item[(i)] From the above description, note that row 0 has only constant terms of the ring $R$. 
			\item[(ii)] For $i=1,\ldots n$, each variable $x_i$ only appears in row $i$ and row $i+1$. Equivalently, in the $i-$th row only variables $x_i$ and $x_{i-1}$ appear.
			\item[(iii)] For $i=1,\ldots n-1$, each variable $x_i$ appears in row $i$ with degree 0 or 1 and in row $i+1$ with degree $-2, -1$ or 0.
			\item[(iv)] For $i=0,\ldots,n-1,$ variable $x_i$ appears with degree $-2$ only once in row $i+1$ and column~$i$.
			\item[(v)] Variable $x_0$ only appears in row 1 with degrees $-2, 1$, or 0.
			\item[(vi)] Variable $x_n$ only appears in row $n$ with degrees 1 or 0.
			
		\end{itemize}
	\end{remark}
	
	\begin{remark}\label{rem:matrixreduced}
		Note that the first row and the first column of the matrix $J_4$ have only two entries. To get the determinant, we can expand along the first column to obtain the following:
		\begin{equation}\label{eqn:determinantsums}
			\det(J_4)= \det(A_1)+ c_1(-a_1x_1x_0^{-2})\det(A)
		\end{equation}
		where $A_1$ is an $n\times n$ matrix obtained by removing the first row and column of $J_4$ and $A$ is an $(n-1)\times (n-1)$ obtained by removing the first two rows and the first two columns of $J_4$. Furthermore, using this expansion and by Remark~\ref{rem:obs}(v), it is easy to check that $x_0$ appears with degree $-2$ only in terms obtained by the expansion of the second summand of the above equation.
	\end{remark}
	
	Let $p(x)=\det(J_4):=\sum_{\mu}\theta_\mu x^{\mu} \in R$. To show that $p$ attains negative values over the positive orthant, by Lemma~\ref{lem:negvertex} it is enough to show that the Newton polytope of $p$ has a negative vertex. Let $\mu \in \ZZ^{n+1}$ be a vector such that $\theta_\mu \neq 0$, then the following proposition accounts for all such possible vectors $\mu$.
	
	\begin{proposition}\label{prop:mu}
		Let $p(x)=\det(J_4)=\sum_{\mu}\theta_\mu x^{\mu} \in R$ and $\mu:=(\mu_0,\ldots, \mu_{n}) \in \ZZ^{n+1}$ be such that $\theta_\mu \neq 0$. Then, the following holds:
		\begin{itemize}
			\item[(i)] For all $i=1,\ldots,n-1$, $\mu_i\in \{-2,-1,0,1\}$, $\mu_{0} \in \{-2,-1,0\}$, and $\mu_n\in \{0,1\}$.
			\item[(ii)] If $\mu_0=-2$, then $\mu_1\in \{0,1\}.$
			\item[(iii)] If $\mu_0=-2$ and $\mu_1=0$, then $\mu_2\neq -2.$
			\item[(iv)] For $i\in \{0,\ldots,n-1\}$, if $\mu_i\in\{-2,-1\}$, then $\mu_{i+1}\neq -2.$
		\end{itemize}
	\end{proposition}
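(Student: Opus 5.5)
We argue directly from the Leibniz expansion $\det(J_4)=\sum_{\sigma}\operatorname{sign}(\sigma)\prod_{i=0}^n (J_4)_{i\sigma(i)}$. Every monomial $x^\mu$ with $\theta_\mu\neq 0$ must occur in at least one product $\prod_i (J_4)_{i\sigma(i)}$, obtained by choosing one monomial from each entry. It therefore suffices to show that every monomial arising in this way satisfies (i)--(iv). Cancellation can only remove monomials, so these support constraints pass to $p$.

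\textbf{Part (i).} By Remark~\ref{rem:obs}(ii), $x_i$ for $1\le i\le n-1$ occurs only in rows $i$ and $i+1$. A product uses exactly one entry from each row, so $\mu_i$ is the sum of the degree contributed by row $i$, which lies in $\{0,1\}$, and the degree contributed by row $i+1$, which lies in $\{-2,-1,0\}$ (Remark~\ref{rem:obs}(iii)). This already gives $\mu_i\in\{-2,\dots,1\}$. For $x_0$, Remark~\ref{rem:obs}(v) says only row~1 contributes, with degrees $-2,-1,0$. (The "$1$" in the remark should read $-1$: the entries of row~1 contain $x_0^{-1}$, $x_0^{-2}$, and constants, as \eqref{eq:J4} shows.) For $x_n$, only row $n$ contributes, with degree $0$ or $1$, by Remark~\ref{rem:obs}(vi).

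\textbf{Part (iv), and (ii) as its case $i=0$.} The key observation is Remark~\ref{rem:obs}(iv). Degree $-2$ in $x_i$ can arise only from the single monomial $-a_{i+1}x_{i+1}x_i^{-2}$ in entry $(i+1,i)$. This monomial simultaneously carries $x_{i+1}^{+1}$. To get $\mu_{i+1}=-2$, row $i+2$ would have to contribute $x_{i+1}^{-3}$, which is impossible, since row $i+2$ contributes degree at least $-2$. So if $\mu_{i+1}=-2$, row $i+1$ must contribute $0$ and row $i+2$ must contribute exactly $-2$, the latter through entry $(i+2,i+1)$.

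Now suppose $\mu_i\in\{-2,-1\}$. Then row $i+1$ must contribute a negative power of $x_i$ (row $i$ contributes $\geq 0$ in $x_i$). I will check from \eqref{eq:J4} that every monomial in row $i+1$ with negative $x_i$-degree also has $x_{i+1}$-degree exactly $+1$. There is one exception: the diagonal term $a_{i+1}x_i^{-1}$, which has $x_{i+1}$-degree $0$. Consider first the non-exceptional case. Then $\mu_{i+1}\ge 1-2=-1$. In the exceptional case, $\sigma(i+1)=i+1$, so column $i+1$ is already used and row $i+2$ cannot use entry $(i+2,i+1)$. Since that entry is the only source of $x_{i+1}^{-2}$, we again get $\mu_{i+1}\neq-2$. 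Part (ii) is the case $i=0$: here $\mu_0=-2$ forces $\sigma(1)=0$ with monomial $-a_1x_1x_0^{-2}$, so $\mu_1=1+(\text{row }2\text{ degree})$. Row $2$ contributes $\ge -2$, which gives $\mu_1\ge -1$. To exclude $\mu_1=-1$ I use Remark~\ref{rem:matrixreduced}. With $\sigma(1)=0$ and $\sigma(0)=1$, the coefficient is $c_1$, so row 2 must act within the submatrix $A$ on columns $2,\dots,n$. The entries of $A$ in row 2 have $x_1$-degree $\ge 0$ except for those in column 1, which is excluded. Hence $\mu_1\in\{0,1\}$.

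\textbf{Part (iii).} Continue the analysis within $A$. We have $\mu_0=-2$ and $\mu_1=0$. Then row 2 must contribute $x_1^{-1}$ from a column $\ge 2$. By the form of \eqref{eq:J4}, the only such terms are of the form $\beta a_2x_2x_1^{-1}$, which carry $x_2^{+1}$. So $\mu_2\ge 1-2=-1$.

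The main obstacle is the bookkeeping in \eqref{eq:J4}. One must verify entry by entry that every monomial with a negative power of $x_{i-1}$ in row $i$, other than the diagonal $a_ix_{i-1}^{-1}$, carries exactly $x_i^{1}$. One must also check that the column-exclusion argument (a permutation uses each column once) is applied correctly in the exceptional diagonal case.
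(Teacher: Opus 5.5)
Your strategy matches the paper's: you track, row by row, the $x$-degrees a Leibniz term can collect, and you use column exclusion for the diagonal monomial $a_{i+1}x_i^{-1}$. Part (i) and your unified argument for (iv) are correct. You are also right that Remark~\ref{rem:obs}(v) should list degree $-1$ rather than $1$. The gap is in (iii), with a related misstatement in (ii). In both places you misdescribe which monomials row~2 offers inside $A$, i.e.\ in columns $2,\dots,n$.

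\textbf{Part (iii).} You claim that the only terms in row~2, columns $\ge 2$, carrying $x_1^{-1}$ have the form $\beta_j a_2x_2x_1^{-1}$. This is false. The diagonal entry $(2,2)$ always contains $a_2x_1^{-1}$, which has $x_2$-degree $0$. The bound $\mu_2\ge 1-2$ does not apply to it, so as written this case is not covered.

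It is exactly the exceptional case you isolated in (iv), and it closes the same way; the paper treats it explicitly. If row~2 uses $a_2x_1^{-1}$, then column~2 is occupied. Row~3 therefore cannot use $-a_3x_3x_2^{-2}$, the only source of $x_2^{-2}$, and so $\mu_2\ge -1$. In fact (iii) follows verbatim from your (iv) argument. Since $\mu_0=-2$ forces row~1 to contribute $x_1^{+1}$, the condition $\mu_1=0$ forces row~2 to contribute a negative power of $x_1$.

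\textbf{Part (ii).} Your assertion that the row-2 entries of $A$ have $x_1$-degree $\ge 0$ is wrong. The monomial $a_2x_1^{-1}$ always occurs there. When layer~1 shares a phosphatase with some later layer, terms in $x_2x_1^{-1}$ occur there as well. Taken literally, your claim would force $\mu_1=1$ and make (iii) vacuous. What is true, and all you need, is that these entries have $x_1$-degree $\ge -1$. This holds because the unique $x_1^{-2}$ sits in column~1, which is already taken by row~0. That gives $\mu_1\in\{0,1\}$.
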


	\begin{proof}
		Part (i) follows directly from Remark~\ref{rem:obs}. From Remark~\ref{rem:obs}(ii), (iii), and (iv), we can deduce $\mu_i\in \{-2,-1,0,1\}$ for all $i=1,\ldots,n-1$. Remark~\ref{rem:obs}(v) and (vi) ensure that $\mu_{0} \in \{-2,-1,0\}$, and $\mu_n\in \{0,1\}$.
		
		Part (ii) follows from Remark~\ref{rem:matrixreduced} and \ref{rem:obs}(iii).
		
		For part (iii), note that $x_0$ appears with degree $-2$ from the second summand of Equation~\eqref{eqn:determinantsums}. Moreover, in that summand, we obtain $\mu_1=0$ from the entries of $A$ where $x_1$ appears with degree $-1$. From \eqref{eq:J4}, entries of $A$ where $x_1$ appears with degree $-1$ have monomials of the form $x_1^{-1}$ or $x_2x_1^{-1}$. In the first case, the monomial $x_1^{-1}$ appears in the same column as the monomial $x_3x_2^{-2}$ and hence, we can never get $\mu_2=-2.$ In the second case, since the degree of $x_2$ in any entry is no less than $-2$, on expanding by Remark~\ref{rem:obs}(iii) we get $\mu_2 \geq -1$ when $\mu_1=0$ and $\mu_0=-2.$
		
		To see part (iv) holds, note that when $i=n-1$, $\mu_{i+1}=\mu_n \geq 0$ by Remark~\ref{rem:obs}. For the remaining part, let $i\in\{0,\ldots,n-2\}.$ 
		
		First, let $\mu_i=-2$, from \eqref{eq:J4}, it is easy to check that this only appears once in a single entry of $J_4$ given by $a_{i+1}x_{i+1}x_{i}^{-2}$. Since all the remaining terms have degree of $x_{i+1}$ greater than or equal to $-2$, on expansion, by Remark~\ref{rem:obs}(iii) we get $\mu_{i+1}\geq {-1}$ whenever $\mu_i=-2.$
		
		Now, let $\mu_i=-1$. From Remark~\ref{rem:obs}(iii), this can be obtained in two ways that are explained next. (a) In the first case, we obtain $\mu_i=-1$ by multiplying the unique monomial, $x_{i+1}x_{i}^{-2}$, containing $x_i$ with degree $-2$ in row $i+1$ of $J_4$ with monomials containing $x_i$ with degree 1 in row $i$. By Remark~\ref{rem:obs}, row $i+2$ has monomials such that $x_{i+1}$ has degree $\geq -2.$ On multiplication with $x_{i+1}x_{i}^{-2}$ this ensures $\mu_{i+1}\geq {-1}$.
		
		(b) In the second case, we consider the entries of $J_4$ in row $i+1$ where $x_i$ appears with degree $-1$. These contain monomials are of the form $x_i^{-1}$ or $x_{i+1}x_i^{-1}$. In the first case, the term $x_i^{-1}$ appears in the same column as the term containing $x_{i+1}^{-2}$ and hence, we can never get $\mu_{i+1}=-2.$ In the second case, since the degree of $x_{i+1}$ in any entry is $\geq -2$, on expanding by Remark~\ref{rem:obs} we get $\mu_{i+1} \geq -1$.
		\end{proof}
	
	\subsection{Negative vertex of $\np(p)$}\label{sec:nv}
	To prove Theorem~\ref{thm:mainthm2}, we need to show that $p(x)$ attains negative values over the positive orthant for all $\k\in \RR^{6n}_{>0}$ and $h\in \RR^r_{>0}$. To that end, in the Section~\ref{sec:str} we looked at the structure of  $p_{\k,h}(x).$ Proposition~\ref{prop:mu} gives conditions on exponent vectors of the monomials of $p(x).$ We will now use this to find a monomial whose coefficient is negative and its exponent vector is a vertex of $\np(p).$ Then, Theorem~\ref{thm:mainthm2} will follow from Lemma~\ref{lem:negvertex}.
	
	\begin{proposition}\label{prop:vtx}
		Let $n \geq 2$ and 
		\begin{equation*}
			\omega=(-2,0,\underbrace{-1,\ldots,-1}_{k},\underbrace{1,\ldots,1}_{n-k-1})
		\end{equation*}
		for some $k \in \{0,\ldots,n-2\}$.
		If $\theta_{\omega} \neq 0$, then $\omega$ is a vertex of $N(p)$.
	\end{proposition}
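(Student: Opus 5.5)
The plan is to exhibit an explicit linear functional $\lambda\in\RR^{n+1}$ that is uniquely maximised at $\omega$ over the support of $p$. Taking $\ell=-\lambda\cdot\omega$, the hyperplane $L_{\lambda,\ell}$ is then a separating hyperplane of $\omega$ in the sense of Section~\ref{sec:NP}, so $\omega$ is a vertex. Since $\theta_\omega\neq 0$, $\omega$ lies in the support, so it suffices to show $\lambda\cdot\mu<\lambda\cdot\omega$ for every exponent $\mu\neq\omega$ with $\theta_\mu\neq 0$. The only information about such $\mu$ that I will use is Proposition~\ref{prop:mu}.

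I will take a lexicographic weight. Fix $M\geq 4$ and set $\lambda_j=-M^{n-j}$ for $j=0,\ldots,k+1$ and $\lambda_j=+M^{n-j}$ for $j=k+2,\ldots,n$. Given $\mu\neq\omega$ in the support, let $j$ be the first index with $\mu_j\neq\omega_j$. The key claim is that $\lambda_j\mu_j\leq\lambda_j\omega_j-M^{n-j}$, that is, $\omega_j$ is the extremal value in the favoured direction among all values of $\mu_j$ allowed by Proposition~\ref{prop:mu} given $\mu_0=\omega_0,\ldots,\mu_{j-1}=\omega_{j-1}$. I will check this case by case.

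If $j=0$, then $\omega_0=-2$ is the minimum of $\{-2,-1,0\}$ by (i). If $j=1$, then $\mu_0=-2$, so $\mu_1\in\{0,1\}$ by (ii), and $\omega_1=0$ is the minimum. If $j=2\le k+1$, then $\mu_0=-2$ and $\mu_1=0$, so $\mu_2\neq-2$ by (iii), hence $\mu_2\geq -1=\omega_2$. If $3\le j\le k+1$, then $\mu_{j-1}=-1$, so $\mu_j\neq -2$ by (iv), hence again $\mu_j\ge -1=\omega_j$. If $j\geq k+2$, then $\omega_j=1$ is the maximum allowed value by (i). For $j=n$ this holds because $\mu_n\in\{0,1\}$. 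In each case the weight sign matches, and since the entries are integers the gain is at least $M^{n-j}$.

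For the remaining coordinates $i>j$, each $|\mu_i-\omega_i|\le 3$ by (i). The total loss is therefore at most $3\sum_{i>j}M^{n-i}=3\,\frac{M^{n-j}-1}{M-1}<M^{n-j}$ for $M\ge4$. Hence $\lambda\cdot\mu<\lambda\cdot\omega$, which proves the claim.

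The only delicate point is the transition cases $j=2$ and $j=k+2$. At $j=2$, the value $-2$ must be excluded using (iii) rather than (iv), since $\mu_1=0$ there. At $j=k+2$, we need $\omega_j=1$ to be the top value regardless of the preceding entry, which (i) gives. The degenerate case $k=0$ is just the situation where the negative block is empty and $j\ge2$ falls directly into the last case.
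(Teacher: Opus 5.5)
Your proof is correct. It shares the paper's skeleton: an explicit functional uniquely maximised at $\omega$, using nothing beyond Proposition~\ref{prop:mu}. The way you verify uniqueness is genuinely different, though.

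The paper uses the small integer weight $\lambda=(-(k+2),-(k+1),\ldots,-1,1,\ldots,1)$. With these weights, a coordinate $\mu_i=-2$ in the block $\{2,\ldots,k+1\}$ scores higher than $\omega_i=-1$, so the paper has to compensate in two ways.
\begin{enumerate}[(a)]
\item It introduces the sets $I_1,I_2$ to pair each such $i\geq 3$ with $i-1$. There (iv) forces $\mu_{i-1}\in\{0,1\}$, and the unit drop in $|\lambda|$ gives a strict inequality.
\item It handles $\mu_2=-2$ as a separate case, enumerating the admissible triples $(\mu_0,\mu_1,\mu_2)$ via (ii)--(iv).
\end{enumerate}

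Your exponentially separated weights make the comparison lexicographic, so only the first differing coordinate matters. Each of (ii), (iii), (iv) is then used exactly once, as the statement that, given the prefix of $\omega$, the next entry of $\omega$ is extremal in the favoured direction. The pairing bookkeeping disappears, and the crude bound $|\mu_i-\omega_i|\le 3$ absorbs all later coordinates.

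The paper's choice yields a concrete normal vector with small entries. That is what one would use to produce explicit witnesses $x^*$, as suggested by the remark after Lemma~\ref{lem:negvertex}. Your choice gives a shorter, more transparent check. It would also apply verbatim to any exponent built greedily from such consecutive-coordinate constraints.
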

	\begin{proof}
		Consider the hyperplane $L_{\lambda,\ell}(z):=z\cdot\lambda+\ell$, where
		\begin{equation*}
			\lambda=(-(k+2),-(k+1),\ldots,-1,\underbrace{1,\ldots,1}_{n-k-1})^T.
		\end{equation*} 
		Note that if $\omega \cdot \lambda > \mu \cdot \lambda$ for all $\mu \in \Omega \setminus \omega$ then we can set $\ell = -\omega \cdot \lambda$ and in this case, it is easy to check that $L_{\lambda,\ell}$ is a separating hyperplane of $\omega$. 
		To show $\omega$ is a vertex of $\np(p)$, it is now enough to show that 
		$\omega \cdot \lambda > \mu \cdot \lambda$ for all $\mu \in \Omega \setminus \omega$.
		
		Let $\mu:=(\mu_0,\mu_1,\ldots, \mu_n)$ be an arbitrary exponent vector in $\Omega.$ Then, by Proposition~\ref{prop:mu}, we have that 
		\begin{align*}(\mu_0,\mu_1)\in \{(-2,0),(-2,1),(-1,-1),(-1,0),(-1,1),(0,-2),(0,-1),(0,0),(0,1)\}.
		\end{align*}
		It is straightforward to check that  
		\begin{align}\label{eq:l1}
			\l_0\omega_0+\l_1\omega_1\geq \l_0\mu_0+\l_1\mu_1
		\end{align}
		and the equality holds iff $(\mu_0,\mu_1)=(\omega_0,\omega_1)$.
		
		Similarly, by Proposition~\ref{prop:mu}(i), the maximum value of an entry in any exponent vector is 1 and hence, 
		\begin{equation}\label{eq:l2}
			\sum_{i=k+2}^n \l_i \omega_i \geq \sum_{i=k+2}^n \l_i \mu_i,
		\end{equation}
		and equality holds if and only if $\mu_i = \omega_i$ for all $i \in \{k+2,\ldots,n\}$.
		
		Furthermore, let $J \subset \{2,\ldots,k+1\}$, such that $\mu_j \ne -2$ for $j \in J$. Let $I_1 \subset \{3,\ldots,k+1\}$ be the set such that $\mu_i = -2$ for all $i \in I_1$ and $\mu_i \ne -2$ for any $i \in \{3,\ldots,k+1\} \setminus I_1$.
		By $I_{2}$ denote the set obtained by subtracting 1 from each element of $I_1$. By Proposition \ref{prop:mu} (iv), we have $I_{1} \cap I_2 = \emptyset$. Using this, we have
		\begin{equation}\label{eq:l3}
			\sum_{j \in J} \lambda_j \omega_j=\sum_{j \in J} -(k+2-j)(-1) \geq \sum_{j \in J} -(k+2-j) \mu_j = 	\sum_{j \in J} \lambda_j \mu_j 
		\end{equation}
		and the equality holds if and only if $\mu_j = \omega_j$ for all $j \in J$.  Moreover, for $i \in I_1$, by Proposition~\ref{prop:mu} (iv), we have $\mu_{i-1} = 0 \text{ or } 1$. In either of these cases, simple computation gives
		\begin{equation}\label{eq:l4}
			\lambda_{i-1} \omega_{i-1} + \lambda_i \omega_i >	\lambda_{i-1} \mu_{i-1} + \lambda_i \mu_i .
		\end{equation}

		With the above inequalities in place, note that if $k=0$, then by \eqref{eq:l1} and \eqref{eq:l2}, we have 
		\begin{equation*}
			\omega \cdot \l=\sum_{i=0}^1 \lambda_i \omega_i+\sum_{i=2}^n \l_i \omega_i \geq \sum_{i=0}^1 \lambda_i \mu_i+\sum_{i=2}^n \l_i \mu_i = \mu \cdot \lambda.
		\end{equation*}
		The equality holds if and only if $\mu = \omega$. 
		Next, we consider the cases when $k\neq 0$. We will split this into two cases (a) when $\mu_2\neq-2$ and (b) when  $\mu_2 = -2$.

		{\bf Case (a).} Assume that $\mu_2\neq-2$. By \eqref{eq:l1}, \eqref{eq:l2}, \eqref{eq:l3} and \eqref{eq:l4}, we get the desired inequality as below
		\begin{align*}
			\omega\cdot \l &= \sum_{i=0}^1 \lambda_i \omega_i + \sum_{i \in I_1} (\lambda_{i-1} \omega_{i-1} + \lambda_i \omega_i) + \sum_{i \in \{2,\ldots,k+1\} \setminus (I_{1} \cup I_2)} \lambda_i \omega_i + \sum_{i=k+2}^n \omega_i\\
			&\geq  \sum_{i=0}^1 \lambda_i \mu_i + \sum_{i \in I_1} (\lambda_{i-1} \mu_{i-1} + \lambda_i \mu_i) + \sum_{i \in \{2,\ldots,k+1\} \setminus (I_{1} \cup I_2)} \lambda_i \mu_i + \sum_{i=k+2}^n \mu_i \\
			&= \mu \cdot \l,
		\end{align*}
		where the equality holds if and only if $\mu=\omega.$ 
		
		{\bf Case (b).} Assume that $\mu_2 = -2$.
		By Proposition \ref{prop:mu}, we get 
		\begin{align*}
			(\mu_0,\mu_1,\mu_2)	\in \{(-2,1,-2),(-1,0,-2),(-1,1,-2),(0,0,-2),(0,1,-2)\}.
		\end{align*}
		Note that, in this case, the following holds
		\begin{equation}\label{eq:l5}
			\lambda_0 \omega_0+\lambda_1 \omega_1 +\lambda_2 \omega_2 >	\lambda_0 \mu_0+\lambda_1 \mu_1+\lambda_2 \mu_2 
		\end{equation}
		By Proposition \ref{prop:mu} (iv), $\mu_3 \ne -2$.
		Therefore, by \eqref{eq:l2},\eqref{eq:l3},\eqref{eq:l4}, and \eqref{eq:l5} we get 
		\begin{align*}
			\omega\cdot \l &= \sum_{i=0}^2 \lambda_i \omega_i + \sum_{i \in I_1} (\lambda_{i-1} \omega_{i-1} + \lambda_i \omega_i) + \sum_{i \in \{3,\ldots,k+1\} \setminus (I_{1} \cup I_2)} \lambda_i \omega_i + \sum_{i=k+2}^n \omega_i\\
			&>  \sum_{i=0}^2 \lambda_i \mu_i + \sum_{i \in I_1} (\lambda_{i-1} \mu_{i-1} + \lambda_i \mu_i) + \sum_{i \in \{3,\ldots,k+1\} \setminus (I_{1} \cup I_2)} \lambda_i \mu_i + \sum_{i=k+2}^n \mu_i \\
			&= \mu \cdot \l.
		\end{align*}
		This concludes the proof.
	\end{proof}

	Denote by $\hat{i}$ the largest index of the layer that shares phosphatase with other layers. When no phosphatase is shared, then we do not have multistationarity \cite{EF2012}. We therefore, assume that $\hat{i} \geq 2$.

	\begin{proposition}\label{lemma:nvtx}
		If $\hat{i} \ne 0$, 
		then the vector 
		\begin{equation}\label{eq:omega}
			\omega=(-2,0,\underbrace{-1,\ldots,-1}_{\hat{i}-2},\underbrace{1,\ldots,1}_{n-\hat{i}+1})^T
		\end{equation}
		is a negative vertex of $N(f)$.
	\end{proposition}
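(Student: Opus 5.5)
The vector in \eqref{eq:omega} is the vector of Proposition~\ref{prop:vtx} with $k=\hat{i}-2\in\{0,\ldots,n-2\}$; this is admissible because $2\le \hat{i}\le n$. So once $\theta_\omega\neq 0$ is known, Proposition~\ref{prop:vtx} says $\omega$ is a vertex of $\np(p)$. The whole task therefore reduces to computing the coefficient $\theta_\omega$ of $x^\omega$ in $p=\det(J_4)$ and showing it is strictly negative.

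\textbf{Step 1: isolate the relevant part of the expansion.} Since $\omega_0=-2$, Remark~\ref{rem:matrixreduced} shows that only the second summand $c_1(-a_1x_1x_0^{-2})\det(A)$ of \eqref{eqn:determinantsums} contributes. This summand already carries the factor $x_1$. We therefore need the coefficient of
\[
x_1^{-1}x_2^{-1}\cdots x_{\hat{i}-1}^{-1}\,x_{\hat{i}}\cdots x_n
\]
in $\det(A)$, where $A$ is indexed by rows and columns $2,\ldots,n$.

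\textbf{Step 2: track the exponents row by row.} By Remark~\ref{rem:obs}(ii), row $i$ of $A$ involves only $x_{i-1}$ and $x_i$. For $i\ge \hat{i}$ we need $x_i$ with degree $1$. Proceeding from row $n$ downward, the degree of $x_i$ is forced to come entirely from row $i$, via the diagonal term $\beta_ix_i$ or a term $\delta\beta x_i$ produced by phosphatase sharing. For $2\le i\le \hat{i}-1$ the target exponent of $x_{i-1}$ is $-1$. I expect the only consistent choice to be the monomials $a_ix_{i-1}^{-1}$, giving $x_1^{-1}$ from row $2$, and more generally $x_{i-1}^{-1}$ from row $i$ on the diagonal. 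The remaining exponent, $x_{\hat{i}-1}^{-1}$, has to be cancelled by something carrying $x_{\hat{i}-1}^{+1}\cdot x_{\hat{i}-1}^{-1}$.

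This is exactly where sharing enters. Row $\hat{i}$ contains the term $-\delta_{\hat{i}-1,k}\ldots\beta_{\hat{i}}a_{\hat{i}}x_{\hat{i}}x_{\hat{i}-1}^{-1}$ and similar ones. I would enumerate which permutations produce the monomial. The likely outcome is that, after using the tridiagonal-plus-sharing structure, each surviving term is a product of positive quantities times $-\beta_{j}a_{\hat{i}}$ for a layer $j<\hat{i}$ sharing the phosphatase $\iota(\hat{i})$, coming from the ``else'' entry $(J_4)_{\hat{i},j}$ with its negative sign. Combining this with the leading factor $-c_1a_1$ and the permutation sign should give a strictly negative $\theta_\omega$.

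\textbf{Main obstacle.} The hard step is the sign bookkeeping: showing that every contributing permutation yields the same sign, so there is no cancellation. I would first treat the case where $\hat{i}$ shares with $\hat{i}-1$ via the subdiagonal entry, then the general case via off-band entries $(J_4)_{\hat{i},j}$ with $j\in\Lambda_{\iota(\hat{i})}$. In the general case, a permutation using that entry forces a cycle $j\to j+1\to\cdots\to\hat{i}\to j$ through the superdiagonal $c$ terms. I would verify that the cycle sign $(-1)^{\hat{i}-j}$ exactly compensates the degree pattern, so that all contributions are negative multiples of positive products of $a,b,c,\beta$. Maximality of $\hat{i}$ is what guarantees that no layer above $\hat{i}$ creates extra terms: for $i>\hat{i}$ every $\delta$-product in row $i$ vanishes.
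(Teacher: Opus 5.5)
Your reduction is correct: $\omega$ is the vector of Proposition~\ref{prop:vtx} with $k=\hat{i}-2$, so the proof comes down to showing $\theta_\omega<0$, and only the $x_0^{-2}$-part of the expansion matters. The gap is in the central step, deciding which permutations produce $x^\omega$. Your guess for the general case is wrong.

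\begin{itemize}
\item A term through $(J_4)_{\hat{i},j}$ with $j<\hat{i}$, closed up by the superdiagonal constants $c_{r+1}$ in rows $r=j,\ldots,\hat{i}-1$, contributes nothing. A constant in row $r$ gives $x_{r-1}$ exponent $0$ from that row. But every row $r\in\{2,\ldots,\hat{i}\}$ must supply exponent $-1$ or $-2$ in $x_{r-1}$: we have $\omega_1=0$ while row $1$ already gives $x_1^{+1}$, and $\omega_{r-1}=-1$ otherwise.
\item The $x_{\hat{i}}x_{\hat{i}-1}^{-1}$-coefficient of $(J_4)_{\hat{i},j}$ is nonzero iff $\iota(\hat{i}-1)=\iota(j)$, not iff $\iota(j)=\iota(\hat{i})$.
\item Concretely, take $n=3$ with $\iota(1)=\iota(3)\neq\iota(2)$, so $\hat{i}=3$. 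The diagonal term you handle vanishes, and your entry $(J_4)_{3,1}=\beta_1x_3$ lies in a column deleted from $A$, so your scheme finds no contribution. In fact $\theta_\omega=-c_1a_1a_2a_3\beta_3$, coming from the superdiagonal entry $(J_4)_{2,3}=c_3-\beta_3a_2x_2x_1^{-1}$.
\item There is also a sign slip. Expanding along column $0$ gives $\det(J_4)=\det(A_1)+c_1a_1x_1x_0^{-2}\det(A)$, because the position $(1,0)$ carries the sign $(-1)^{1+0}$; the display in Remark~\ref{rem:matrixreduced} has this wrong. So the leading factor is $+c_1a_1$. With your $-c_1a_1$, a correct computation of $\det(A)$ would give the wrong sign for $\theta_\omega$.
\end{itemize}

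The missing idea is an exponent-propagation argument that pins the permutation down. Write $\omega_i=p_i+q_i$, where $p_i\in\{0,1\}$ is the exponent of $x_i$ taken from row $i$ and $q_i\in\{0,-1,-2\}$ is the exponent taken from row $i+1$. By \eqref{eq:J4}, row $r$ can only supply $(q_{r-1},p_r)\in\{(0,0),(-1,0),(0,1),(-1,1),(-2,1)\}$.

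\begin{itemize}
\item $q_0=-2$ forces $p_1=1$ and hence $q_1=-1$.
\item For $2\le i\le\hat{i}-1$ one has $(p_i,q_i)\in\{(0,-1),(1,-2)\}$. Since $(-2,0)$ is unavailable, once $p$ becomes $1$ it stays $1$.
\item For $i\ge\hat{i}$ one has $(p_i,q_i)=(1,0)$.
\end{itemize}

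Hence there is a unique switch row $j\in\{2,\ldots,\hat{i}\}$, and the choices are forced:

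\begin{itemize}
\item rows $2,\ldots,j-1$ use the diagonal term $a_rx_{r-1}^{-1}$;
\item row $j$ uses a term $x_jx_{j-1}^{-1}$;
\item rows $j+1,\ldots,\hat{i}$ use the subdiagonal term $-a_rx_rx_{r-1}^{-2}$;
\item rows $r>\hat{i}$ use the diagonal term $\beta_rx_r$ (here maximality of $\hat{i}$ is used);
\item row $1$ uses column $0$ and row $0$ uses column $1$.
\end{itemize}

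The only free column is $\hat{i}$, so row $j$ must take $-\beta_{\hat{i}}a_jx_jx_{j-1}^{-1}$ from $(J_4)_{j,\hat{i}}$, which is present iff $\iota(j-1)=\iota(\hat{i})$. The resulting cycle is $j\to\hat{i}\to\hat{i}-1\to\cdots\to j+1\to j$, the transpose of yours. Each such term has sign $(-1)^{\hat{i}-j+2}(-1)^{\hat{i}-j+1}=-1$. Summing,
$\theta_\omega=-N\,c_1\beta_{\hat{i}}\prod_{r=1}^{\hat{i}}a_r\prod_{r>\hat{i}}\beta_r$, where $N=\#\{\ell<\hat{i}:\iota(\ell)=\iota(\hat{i})\}\ge 1$.

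This is exactly the paper's case analysis: its cases (a) and (b) are $j<\hat{i}$ and $j=\hat{i}$. Your expectation that rows $2,\ldots,\hat{i}-1$ all sit on the diagonal captures only the case $j=\hat{i}$.
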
  
	
	\begin{proof}
		We need to show that the coefficient, $\theta_{\omega}$, of the monomial whose exponent vector is given by $\omega$ is negative. Since $\omega_0=-2$, by Remark~\ref{rem:obs}, the monomial only appears in the expansion of the second summand in Equation~\eqref{eqn:determinantsums}. 
		
		Note that row $i$ has monomials of the form $x_i,x_ix_{i-1}^{-1},x_ix_{i-1}^{-2},x_{i-1}^{-1}$ and constants. We claim that to obtain $\omega$, there exists some $i \in \{2,\ldots,n\}$, such that the monomial $x_ix_{i-1}^{-1}$ is selected in row $i$ for multiplication. Indeed, if this is not the case, then only $x_i,x_ix_{i-1}^{-2},x_{i-1}^{-1}$ and constants can be selected.
		By Remark \ref{rem:obs} (ii) and \eqref{eq:J4}, for $i \in \{2,\ldots,\hat{i}\}$, $x_{i-1}^{-1}$ is selected in row $i$ from the diagonal entry to let $\omega_{i-1}=-1$.
		Notice that $\omega_{\hat{i}}=1$, but there does not exist a term with the exponent of $x_{\hat{i}}$ equals to 1 in row $\hat{i}+1$ by Remark~\ref{rem:obs}~(iii).
		
		Next, for some $j \in \{2,\ldots,n\}$, assume that $x_jx_{j-1}^{-1}$ is selected in row $j$. 
		Then we show the selection of terms for multiplication in other rows is unique to obtain $\omega$. 
		The choice of $j$ can be divided into three cases.

		{\bf Case (a).}
		Consider the case when $2\leq j< \hat{i}$.
		In this case, $\omega_{j}=-1$.
		Notice $\omega_{j-1}=-1$, so the exponent of $x_{j-1}$ in monomial in row $j-1$ should be 0. Therefore, $x_{j-2}^{-1}$ is the only selection.
		Likewise, for any $2\leq i\leq j-1$, $x_{i-1}^{-1}$ is selected in row $i$.
		It is direct to check $\omega_1=0$.
		In row $j+1$, $x_{j+1}x_{j}^{-2}$ is the only selection to let $\omega_{j}=-1$.
		Likewise, for any $j+1\leq i\leq \hat{i}$, $x_{i}x_{i-1}^{-2}$ is selected in row $i$.
		Notice $\omega_{\hat{i}}=1$ and $x_{\hat{i}}x_{\hat{i}-1}^{-2}$ is selected in row $\hat{i}$, $x_{\hat{i}+1}$ is the only selection in row $\hat{i}+1$.
		Likewise, for any $\hat{i}+1\leq i\leq n$, $x_{i}$ is selected in row $i$.
		It is direct to check $\omega_n=1$.
		
		Next, we check if these monomials lie on different columns and the sign of the resulting term.
		By \eqref{eq:J4}, in row $i$, $x_ix_{i-1}^{-2}$ appears only in column $i-1$ with a negative coefficient, and $x_{i-1}^{-1}$ appears only in column $i$ with a positive coefficient.
		The coefficients of $x_i$ are $\beta_i$, $\delta_{i,k}\delta_{i+1,k}\beta_{i+1}$ and $\delta_{i-1,k}\delta_{i,k}\beta_{i-1}$.
		Since $x_i$ is selected only when $\hat{i}+1\leq i\leq n$ and $\hat{i}$ is the largest index of the layer that shares phosphatase with other layers, we have $\delta_{i,k}\delta_{i+1,k}=\delta_{i-1,k}\delta_{i,k}=0$.
		Note here that $\beta$s are all positive.
		So, $x_i$ appears only in column $i$ with a positive coefficient.
		To sum up, the selection of monomials for multiplication in other rows to obtain $\omega$ is unique: in row 0, $c_1$ is selected in column 1; in row 1, $x_1x_{0}^{-2}$ is selected in column 0; in row $2\leq i\leq j-1$, $x_{i-1}^{-1}$ is selected in column $i$; in row $j+1\leq i\leq \hat{i}$, $x_{i}x_{i-1}^{-2}$ is selected in column $i-1$; in row $\hat{i}+1\leq i\leq n$, $x_{i}$ is selected in column $i$.
		The only vacant column is $\hat{i}$. This indicates that $x_jx_{j-1}^{-1}$ should appear in column $\hat{i}$.
			Notice that in row $j$ and column $\hat{i}$, the coefficient of $x_jx_{j-1}^{-1}$ is nonzero if and only if $\delta_{j-1,k}\delta_{\hat{i},k} \ne 0$ for some $k$ by \eqref{eq:J4}.
			So case (a) exists if and only if $\iota(j-1) = \iota(\hat{i})$, i.e, the $(j-1)$-th and $\hat{i}$-th layers share the same phosphatase.
			The coefficient of $x_jx_{j-1}^{-1}$ is negative.
		The number of terms with a negative sign is $\hat{i}-j+2$ and the number of reverse order is $\hat{i}-j+1$, so the sign of the resulting term is $(-1)^{\hat{i}-j+2}\cdot(-1)^{\hat{i}-j+1}=-1$.
		
		{\bf Case (b).}
		Consider the case when $j = \hat{i}$.
		In this case, the selection of monomials for multiplication in other rows to obtain $\omega$ is unique too: in row 0, $c_1$ is selected in column 1; in row 1, $x_1x_{0}^{-2}$ is selected in column 0; in row $2\leq i\leq \hat{i}-1$, $x_{i-1}^{-1}$ is selected in column $i$; in row $\hat{i}+1\leq i\leq n$, $x_{i}$ is selected in column $i$.
		Likewise, case (b) exists if and only if $\iota(\hat{i}-1) = \iota(\hat{i})$.
		The number of terms with a negative sign is $2$ and the number of reverse order is $1$, so the sign of the resulting term is $-1$.
		
		{\bf Case (c).}
		Consider the case when $\hat{i}< j\leq n$.
		Notice that $\omega_{j-1}=1$, but there does not exist a term with the exponent of $x_{j-1}$ equals to 2 in row $j-1$.
		
		To sum up, cases (a) and (b) give all possibilities to obtain a term with the exponent $\omega$, and since at least two layers shares a phosphatase, at least one of the two cases always exists. Moreover, the corresponding coefficient is always negative.
		Hence, by Proposition \ref{prop:vtx}, $\omega$ is a negative vertex of $N(f)$.
	\end{proof}
	
	We now use this to prove Theorem~\ref{thm:mainthm2}.
	\begin{proof}[Proof of Theorem~\ref{thm:mainthm2}]
		By Proposition~\ref{lemma:nvtx} there exists a monomial in $p(x)$ that has a negative coefficient and by Proposition~\ref{prop:vtx}, this monomial in fact describes a vertex of the Newton polytope, $\np(p(x)).$ Hence, by Lemma~\ref{lem:negvertex}, for any $\k \in \Rp^{6n}$ and $h \in \Rp^r$, there exists $x^*\in \Rp^{n+1}$ such that  $p_{\k,h}(x^*)<0$.
	\end{proof}
    
	\section{\bf Disconnectedness in $\Sigma_m$ for $n=2$}\label{sec:disconnected}
	
	Here we consider the network for $n=2$ and $r=1$ and show that there are atmost two disconnected regions of multistationarity $\Sigma_{\k,d}$. The network $\mathcal{N}_2$ with shared phosphatase is as follows:
	\begin{align}\label{eq:2layernet}
		\begin{split}
			&Y_1+X_{0}  \stackrel{\kappa_{1}}{\underset{\kappa_{2}}{\rightleftarrows}} V_i \stackrel{\kappa_{3}}{\longrightarrow} X_1+X_{0}, \quad
			X_1+H_1  \stackrel{\kappa_{4}}{\underset{\kappa_{5}}{\rightleftarrows}} U_1 \stackrel{\kappa_{6}}{\longrightarrow} Y_1+H_1,  \\
			&Y_2+X_{1}  \stackrel{\kappa_{7}}{\underset{\kappa_{8}}{\rightleftarrows}} V_2 \stackrel{\kappa_{9}}{\longrightarrow} X_2+X_{1}, \quad
			X_2+H_1  \stackrel{\kappa_{10}}{\underset{\kappa_{11}}{\rightleftarrows}} U_2 \stackrel{\kappa_{12}}{\longrightarrow} Y_2+H_1. 
		\end{split}
	\end{align}
	Ordering the species as $H_1,X_0,X_1,X_2,Y_1,Y_2,U_1,U_2,V_1,V_2$ and reactions as labelled in \eqref{eq:2layernet}, we get the stoichiometric matrix $N$ and the reactant matrix $B$ as below. 
	
	{\scriptsize
		\begin{align*}
			N=\begin{bmatrix}
				0 & 0 & 0 & -1 & 1 & 1 & 0 & 0 & 0 & -1 & 1 & 1\\
				-1 & 1 & 1 & 0 & 0 & 0 & 0 & 0 & 0 & 0 & 0 & 0\\
				0 & 0 & 1 & -1 & 1 & 0 & -1 & 1 & 1 & 0 & 0 & 0\\
				0 & 0 & 0 & 0 & 0 & 0 & 0 & 0 & 1 & -1 & 1 & 0\\
				-1 & 1 & 0 & 0 & 0 & 1 & 0 & 0 & 0 & 0 & 0 & 0\\
				0 & 0 & 0 & 0 & 0 & 0 & -1 & 1 & 0 & 0 & 0 & 1\\
				0 & 0 & 0 & 1 & -1 & -1 & 0 & 0 & 0 & 0 & 0 & 0\\
				0 & 0 & 0 & 0 & 0 & 0 & 0 & 0 & 0 & 1 & -1 & -1\\
				1 & -1 & -1 & 0 & 0 & 0 & 0 & 0 & 0 & 0 & 0 & 0\\
				0 & 0 & 0 & 0 & 0 & 0 & 1 & -1 & -1 & 0 & 0 & 0\\
			\end{bmatrix}, \
			B=\begin{bmatrix}
				0 & 0 & 0 & 1 & 0 & 0 & 0 & 0 & 0 & 1 & 0 & 0\\
				1 & 0 & 0 & 0 & 0 & 0 & 0 & 0 & 0 & 0 & 0 & 0\\
				0 & 0 & 0 & 1 & 0 & 0 & 1 & 0 & 0 & 0 & 0 & 0\\
				0 & 0 & 0 & 0 & 0 & 0 & 0 & 0 & 0 & 1 & 0 & 0\\
				1 & 0 & 0 & 0 & 0 & 0 & 0 & 0 & 0 & 0 & 0 & 0\\
				0 & 0 & 0 & 0 & 0 & 0 & 1 & 0 & 0 & 0 & 0 & 0\\
				0 & 0 & 0 & 0 & 1 & 1 & 0 & 0 & 0 & 0 & 0 & 0\\
				0 & 0 & 0 & 0 & 0 & 0 & 0 & 0 & 0 & 0 & 1 & 1\\
				0 & 1 & 1 & 0 & 0 & 0 & 0 & 0 & 0 & 0 & 0 & 0\\
				0 & 0 & 0 & 0 & 0 & 0 & 0 & 1 & 1 & 0 & 0 & 0\\
			\end{bmatrix}.
		\end{align*}
	}
	
	Given the soichiometric matrix $N$ for the above network, we consider a convex pointed polyhedral cone $\ker(N)\cap\Rn^{12}.$ This cone is also known as the flux cone in the literature. It has a minimum generating set of 6 extreme vectors $E_1,\ldots,E_6$ unique upto scaling. We denote the following extreme matrix, $E$, given by these extreme vectors.
	
	{\scriptsize
		\begin{align*}
			E=\begin{bmatrix}
				0 & 0 & 0 & 0 &0 & 0 & 0 & 0 & 0 & 1 & 1 & 0\\
				0 & 0 & 0 & 0 &0 & 0 & 1 & 1 & 0 & 0 & 0 & 0\\
				0 & 0 & 0 & 0 &0 & 0 & 1 & 0 & 1 & 1 & 0 & 1\\
				0 & 0 & 0 & 1 &1 & 0 & 0 & 0 & 0 & 0 & 0 & 0\\
				1 & 1 & 0 & 0 &0 & 0 & 0 & 0 & 0 & 0 & 0 & 0\\
				1 & 0 & 1 & 1 &0 & 1 & 0 & 0 & 0 & 0 & 0 & 0
			\end{bmatrix}^{\top}.
	\end{align*}}

	For a fixed set of reaction rates $\k$, we recall that the steady state variety is given by $V_{\k}$ and we consider the set $V\subset \Rp^{12}\times \Rp^{10}$ given as below:
	\[V:=\{(\k,h,x,y,u,v) \in \Rp^{12}\times \Rp^{10} ~\mid~x \in V_{\k}\}.\]
	Using the matrices $N,B,$ and $E$ we can reparametrise the above set of steady states in terms of convex coordinates. Let $m=(m_1,\ldots,m_{10})\in\Rp^{10}$ and $\ell = (\ell_1,\ldots,\ell_6)\in\Rp^{6}$, consider the following map:
	\begin{equation}\label{eq:parametrization}f: \Rp^{10}\times \Rp^{6}\to V, \qquad (m,\ell)\mapsto (\diag(m^{B_1},\ldots,m^{B_{12}})E\ell,\frac{1}{m})\end{equation} where $B$ is the reactant matrix of \eqref{eq:2layernet} and $B_j$ is the $j$-th column of the matrix $B$ and $m^{B_j}= \prod_{i=1}^{10}m_i^{B_{ij}}.$ 
	The critical function in \eqref{eq:crit} gives a map $q:V \rightarrow \RR$ and using a parameterisation of $V$ in \eqref{eq:parametrization}, we obtain 
	\begin{equation}\label{eq:critconvex}
		\tilde{g}:\Rp^{10}\times \Rp^6 \to \RR
	\end{equation}
	which gives a different expression of the critical function in terms of the convex coordinates $(m,\ell)$. To understand connected sets of regions of parameters in $\Sigma_{\k,d}$ that give multistationarity we need to count the connected components of $\tilde{g}^{-1}(\RR_{<0}).$
	The explicit polynomial expression of $\tilde{g}$ for this network is given by
	\begin{align*}
		\tilde{g}(m,\ell):=	(A_{13}m_1m_3+A_1m_1+A_3m_3+A_0) (\ell_1 + \ell_3)  (\ell_2 + \ell_3)    (\ell_4 + \ell_6)  (\ell_5 + \ell_6)\ell_3  \ell_6
	\end{align*}
	
	where
		\begin{align*}
		A_{13}:=& (m_6 - m_4) (m_{10} m_2 m_7 + m_{10} m_5 m_7 + m_{10} m_5 m_9 + m_{10} m_7 m_9 - 
		m_5 m_8 m_9)\\ 
		A_1:=& m_5 m_9 (m_{10} m_4 m_6 + m_{10} m_6 m_7 + m_{10} m_4 m_8 + m_{10} m_6 m_8 + m_4 m_6 m_8) \\
		A_3:= &m_{10} m_2 m_4 m_6 m_7 + m_{10} m_4 m_5 m_6 m_7 + m_{10} m_2 m_4 m_7 m_8 + 
		m_{10} m_4 m_5 m_7 m_8 + m_{10} m_2 m_6 m_7 m_8 +\\& m_2 m_4 m_6 m_7 m_8 + 
		m_{10} m_5 m_6 m_7 m_8 + m_4 m_5 m_6 m_7 m_8 + m_{10} m_4 m_5 m_6 m_9 + 
		m_{10} m_4 m_6 m_7 m_9 +\\& m_{10} m_4 m_5 m_8 m_9 + m_{10} m_5 m_6 m_8 m_9 + 
		m_4 m_5 m_6 m_8 m_9 + m_{10} m_4 m_7 m_8 m_9 + m_4 m_5 m_7 m_8 m_9 +\\& m_{10} m_6 m_7 m_8 m_9 +
		m_4 m_6 m_7 m_8 m_9\\
		A_0:=&m_5 m_7 m_9 (m_{10} m_4 m_6 + m_{10} m_4 m_8 + m_{10} m_6 m_8 + m_4 m_6 m_8).
	\end{align*}
	
	However, $\tilde{g}$ is negative iff the first factor is negative and hence, we focus on the polynomial $g$ given by:
	
	\begin{equation}
		g:=A_{13}m_1m_3+A_1m_1+A_3m_3+A_0
	\end{equation}
	
	\begin{remark}\label{rem:M}
		Note that $A_1,A_3,$ and $A_0$ has only positive terms and hence, $g$ attains negative values if and only if $A_{13}<0.$ Let $M:=\{(m_2,m_4,m_5,\ldots,m_{10})\in \Rp^8~\mid~A_{13}<0\}.$ Then, $M:=M_1\cup M_2$ where
		{\footnotesize
			\begin{align*}&M_1:=\Big\{(m_2,m_4,m_5,\ldots,m_{10})\in \Rp^8~\mid~ m_6 < m_4 \text{ and } 
				m_{8} < \frac{m_{10} m_2 m_7 + m_{10} m_5 m_7 + m_{10} m_5 m_9 + m_{10} m_7 m_9}{m_5 m_9}\Big\},\\
				&M_2:=\Big\{(m_2,m_4,m_5,\ldots,m_{10})\in \Rp^8~\mid~ m_4 < m_6 \text{ and } 
				m_{8} > \frac{m_{10} m_2 m_7 + m_{10} m_5 m_7 + m_{10} m_5 m_9 + m_{10} m_7 m_9}{m_5 m_9}\Big\}.
		\end{align*}}
		From the above expressions it is easy to check that $M_1$ and $M_2$ are disconnected.
	\end{remark}
	
	We will now prove the main result of this section.
	
	\begin{theorem}\label{thm:disconnected}
		There are at most two connected components of multistationarity region in $\Sigma_{\k,d}.$
	\end{theorem}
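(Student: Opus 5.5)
The plan is to bound the number of connected components of $\tilde g^{-1}(\RR_{<0})$ in the convex coordinates $\Shl=\Rp^{10}\times\Rp^6$ and then push this bound forward to $\Skd$. As recalled in the introduction (following \cite{ConnectivityPaper}), the number of connected components of the set where the reparametrised critical polynomial is negative is an upper bound for the number of connected components of the multistationarity region in $\Skd$. The reason is that the region in $\Skd$ is the image of this set under a continuous map: the composition of the parametrisation $f$ in \eqref{eq:parametrization} with $\Phi$. This composition is surjective onto the multistationarity region because of Proposition~\ref{prop:cp<0} together with the converse direction of \cite[Theorem 1]{CC2017}, which applies since the network is dissipative and has no relevant boundary steady states. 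Continuous images of connected sets are connected, so it suffices to show that $\tilde g^{-1}(\RR_{<0})$ has at most two connected components.

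First, $\tilde g = g\cdot(\ell_1+\ell_3)(\ell_2+\ell_3)(\ell_4+\ell_6)(\ell_5+\ell_6)\ell_3\ell_6$, and the $\ell$-factor is strictly positive on $\Rp^6$. Therefore
\[
\tilde g^{-1}(\RR_{<0}) = \{(m,\ell): g(m)<0\}\times\Rp^6,
\]
and connectivity only needs to be checked for $G:=\{m\in\Rp^{10}: g(m)<0\}$.

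By Remark~\ref{rem:M}, a point of $G$ has $\hat m:=(m_2,m_4,\dots,m_{10})\in M=M_1\cup M_2$. So $G=G_1\cup G_2$, where $G_k:=G\cap\{\hat m\in M_k\}$, and it suffices to show each $G_k$ is path connected. For fixed $\hat m\in M_k$, the fibre $F_{\hat m}=\{(m_1,m_3)\in\Rp^2: A_{13}m_1m_3+A_1m_1+A_3m_3+A_0<0\}$ with $A_{13}<0$ is nonempty. It is also connected: solving for $m_3$, the condition reads $m_3(A_{13}m_1+A_3)<-(A_1m_1+A_0)$. This forces $m_1>-A_3/A_{13}$ and then $m_3>(A_1m_1+A_0)/(-A_{13}m_1-A_3)$. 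Hence the fibre is the region above the graph of a continuous function over an open interval, which is connected. The fibre moreover varies continuously with $\hat m$, since its defining data $A_{13},A_1,A_3,A_0$ are continuous in $\hat m$.

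Next I would show that $M_1$ and $M_2$ are each path connected. Both are defined by an inequality between $m_4$ and $m_6$ and an inequality $m_8<R$ (resp.\ $m_8>R$), where $R=R(m_2,m_5,m_7,m_9,m_{10})>0$ is continuous. Over the convex set of the remaining coordinates, each is a product-like region bounded by continuous graphs: $\{m_6<m_4\}$ is convex, and $\{0<m_8<R\}$ or $\{m_8>R\}$ is an interval bundle. Straight-line or fibrewise-monotone homotopies then connect any two points.

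Finally, I would connect two points of $G_k$ as follows. Move $\hat m$ along a path in $M_k$, and simultaneously choose $(m_1,m_3)$ continuously in the fibre, for instance by taking $m_1=1-A_3/A_{13}$ and $m_3$ equal to the lower bound plus one. This gives a continuous section over $M_k$. Within each endpoint's fibre, use the fibre's connectivity to reach the section. This yields at most two components.

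The main obstacle is justifying rigorously the transfer of the bound from $(m,\ell)$ to $\Skd$, namely that the map from convex coordinates onto parameters is continuous and surjective onto exactly the multistationarity region. This requires the precise form of the result in \cite{ConnectivityPaper}; I would invoke it rather than reprove it.
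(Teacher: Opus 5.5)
Your work on the $(m,\ell)$ side is sound, and it is more explicit than the paper's. You get connectivity of each fibre $\{(m_1,m_3): g<0\}$ by solving the bilinear inequality directly, where the paper invokes \cite[Theorem 3.6]{telek1} through the single negative vertex of $\np(g_0)$. Your continuous section over $M_k$ also supplies the gluing step that the paper leaves implicit.

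The gap is in the transfer to $\Skd$. You claim that $(m,\ell)\mapsto(\kappa,\Phi(1/m))$ maps $\tilde g^{-1}(\RR_{<0})$ \emph{onto} the multistationarity region, citing the converse direction of \cite[Theorem 1]{CC2017}. That converse only says: if every positive steady state in a class has $q>0$, then the class has exactly one steady state. So a multistationary class is guaranteed a steady state with $q\le 0$, not one with $q<0$. Consider a saddle-node, where a class contains a degenerate steady state with $q=0$ and one further steady state with $q>0$. Such a parameter is multistationary but need not lie in the image of $\{\tilde g<0\}$. Your surjectivity claim is therefore unjustified, and ``continuous image of at most two connected sets'' only bounds the components of a possibly proper subset of the region.

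This is exactly why \cite[Theorem 4]{ConnectivityPaper} carries the hypothesis $\overline{\tilde g^{-1}(\RR_{<0})}=\tilde g^{-1}(\RR_{\le 0})$. Under it, the region lies between the image $F(\{\tilde g<0\})$ and its closure. A set squeezed between a union of $k$ connected sets and its closure has at most $k$ components. The paper checks this closure condition, taking it from \cite[Theorem 3.6]{telek1}. Your proposal never mentions it, so invoking \cite{ConnectivityPaper} as planned would leave a hypothesis unverified.

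You can supply it from your own computation. If $g(m)=0$, then $A_{13}<0$, since otherwise $g\ge A_0>0$. Hence $m_1>-A_3/A_{13}$ and $m_3=(A_1m_1+A_0)/(-A_{13}m_1-A_3)$. So $(m_1,m_3+\varepsilon)$, with the same $\hat m$, lies in $G$ for every $\varepsilon>0$. Since the $\ell$-factor is positive, this gives the closure condition for $\tilde g$, and your argument then goes through via \cite[Theorem 4]{ConnectivityPaper}.
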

	
\begin{proof}
	First we note that when $m_4=m_6$, we have that the polynomial $g$ is a positive polynomial. Next we will show that $g^{-1}(\RR_{<0})$ has two connected components. As noted in Remark~\ref{rem:M}, $A_{13}<0$ is a necessary condition for $g<0$. We first consider $g$ as a polynomial only in $m_1,m_3$ and denote it by $g_0$. For any point in $M$, the newton polytope of $g_0$ has exactly one negative vertex and hence, admits a separating hyperplane. By \cite[Theorem 3.6]{telek1}, we then have that for any point in $M$ the closure of $g_0^{-1}(\RR_{<0})$ is given by $g_0^{-1}(\RR_{\leq 0})$ and $g_0^{-1}(\RR_{<0})$ has a unique connected component. However, since, $M$ is disconnected into two distinct components, $g^{-1}(\RR_{<0})$ has two disconnected components. These components are explicitly given by:
	\[D:=\Big\{m\in \Rp^{10}~\mid~ A_{13}<0, ~m_1>\frac{A_3}{-A_{13}}, \text{ and }m_3>\frac{A_0+A_1 m_1}{-A_3+A_{13}m_1}\Big\}.\]
	
	Finally, by \cite[Theorem 4]{ConnectivityPaper}, the multistationarity regions in full parameter space $\Sigma_{\k,d}$ has at most two connected components.
\end{proof}

	While the above result only gives the upper bound, we conclude this section with a conjecture on the number of connected regions of parameter space for multistationarity over all the parameters including reaction rates and the total concentrations.
	
	\begin{conjecture}\label{conj}
		Parameter region of multistationarity is disconnected in the parameter space~$\Sigma_{\k,d}$.
	\end{conjecture}

	\section{\bf Outlook}
	
	In this article, we considered the family of cascade systems of Goldbeter-Koshland loops. These systems can have multiple sites (indexed by $n$) available for phosphorylation and such a system admits multistationarity only if at least two of the sites share a phosphatase for dephosphorylation \cite{EF2012, MS2018}. For these networks we explore the topology of parameter regions that allow for multistationarity. In particular, we are interested in the number of path connected regions in the parameter space. 
	
	One of the main results show that for arbitrary $n\geq 2$, in the space of reaction rate constants, $\Sk$, the set of reaction rates for which there exists some stoichiometric compatibility class that admits multistationarity is path connected. In fact, we show that it is all of $\Sk$. We prove this by studying the structure of the critical polynomial of this network and showing that one of the vertices of its newton polytope always have a negative coefficient. This gives a lower bound on the number of connected components for parameter regions for multistationarity in the full parameter space $\Skd.$ Finally, we only focus on the network with $n=2$ sites for phosphorylation and show that for this network the upper bound for the number of connected components in $\Skd$ is 2. In fact, we conjecture that the actual number in this case is same as the upper bound.
	
	The existing methods in the literature give a sufficient condition for path-connectivity and they have been used to explore this for important families networks in the literature \cite{FKWY,FKWY2,kaihnsa_connectivity_2024,ConnectivityPaper}. The discrepancy in the lower and upper bound for the number of connected components for this networks highlights the need for developing new methods. At the same time, this network presents itself as an interesting test case for developing those techniques. Besides the stated conjecture, it is also unclear if there are certain sufficient structural conditions on the network that ensures the upper bound is more than 1 or how this number evolves for this family of networks when $n>2$.
	
\vspace{1cm}
	
	\paragraph{\bf Acknowledgement} This work emerged from a group project during the MSRI-MPI Leipzig Summer Graduate School on Algebraic Methods for Biochemical Reaction Networks.  NK has
	been funded by the European Union under the Grant Agreement no. 101044561, POSALG. Views and opinions
	expressed are those of the authors only and do not necessarily reflect those of the European Union or European
	Research Council (ERC). Neither the European Union nor ERC can be held responsible for them.

	\vspace{1cm}
	
	\noindent
	\footnotesize {\bf Authors' addresses:}
	
	\medskip
	
	\noindent Nidhi Kaihnsa,
	\  University of Copenhagen \hfill  {\tt nidhi@math.ku.dk}
	
	\noindent Kaizhang Wang,
	\  Peking University \hfill  {\tt wangkz@stu.pku.edu.cn}

\end{document}